\definecolor{mycolor}{RGB}{91,155,213}
\newtheorem{lemma}{Lemma}
\newtheorem{theorem}{Theorem}
\DeclareMathOperator*{\argmax}{arg\,max}
\begin{document}

\title{A Distributed Algorithm for Throughput Optimal Routing in Overlay Networks}
\author{Anurag Rai, Rahul Singh, Eytan Modiano
\thanks{Anurag Rai (rai@mit.edu), Rahul Singh (rsingh12@mit.edu) and Eytan Modiano (modiano@mit.edu) are with the Laboratory of Information and Decision Systems (LIDS), Massachusetts Institute of Technology, Cambridge, MA 02139, USA. }
%\thanks{This work was supported by Grants...}
}

\maketitle

\begin{abstract}
We address the problem of optimal routing in overlay networks. An overlay network is constructed by adding new overlay nodes on top of a legacy network. The overlay nodes are capable of implementing any dynamic routing policy, however, the legacy underlay has a fixed, single path routing scheme and uses a simple work-conserving forwarding policy. Moreover, the underlay routes are pre-determined and unknown to the overlay network. The overlay network can increase the achievable throughput of the underlay by using multiple routes, which consist of direct routes and indirect routes through other overlay nodes. We develop a throughput optimal dynamic routing algorithm for such overlay networks called the Optimal Overlay Routing Policy (OORP). 

OORP is derived using the classical dual subgradient descent method, and it can be implemented in a distributed manner. We show that the underlay queue-lengths can be used as a substitute for the dual variables. We also propose various schemes to gather the information about the underlay that is required by OORP and compare their performance via extensive simulations.
\end{abstract}

\section{Introduction}

Optimal routing algorithms\footnote{A routing algorithm is throughput optimal if it can stabilize any traffic that can be stabilized by some routing algorithm.}
have received a significant amount of attention in the literature for the past two decades (e.g. \cite{tassiulas, aurbuch, neely1, lin_shroff}), however, they have had limited success in terms of implementations. One of the main reasons behind the lack of traction is that these policies require additional functionalities that are not supported by the legacy devices. For example, most of these algorithms need the network to be composed of  homogeneous nodes that possess the ability to implement a dynamic routing policy. In contrast, many legacy networks use a single path routing scheme with a work-conserving forwarding policy such as FIFO, and hence can support only a fraction of the achievable throughput. Thus, an implementation of a throughput optimal scheme usually requires a complete overhaul of the network. An overlay architecture for a gradual move towards optimal routing was proposed in \cite{nathan}. This architecture integrates overlay nodes capable of dynamic routing into an underlay network of legacy devices (see Figure \ref{fig: overlay_underlay_network} for an example). In this paper, we develop a throughput optimal dynamic routing algorithm for such overlay networks.

Overlay networks have been used to improve the performance and capabilities of computer networks for a long time. The Internet itself started as a data network built on top of the telephone network. An overlay architecture to improve the robustness of the Iinternet was proposed in \cite{ron}, where alternate overlay paths are used to overcome path loss in the underlay network. Placement for the overlay node to improve path diversity was studied in \cite{path_diversity}. Architectures for designing overlay networks that improve different quality of service metrics have been proposed in \cite{qron}, \cite{akamai}. Currently overlay is being used extensively for applications such as content delivery, multicast, etc.

\begin{figure}[t]
\centering
\includegraphics[scale=.6]{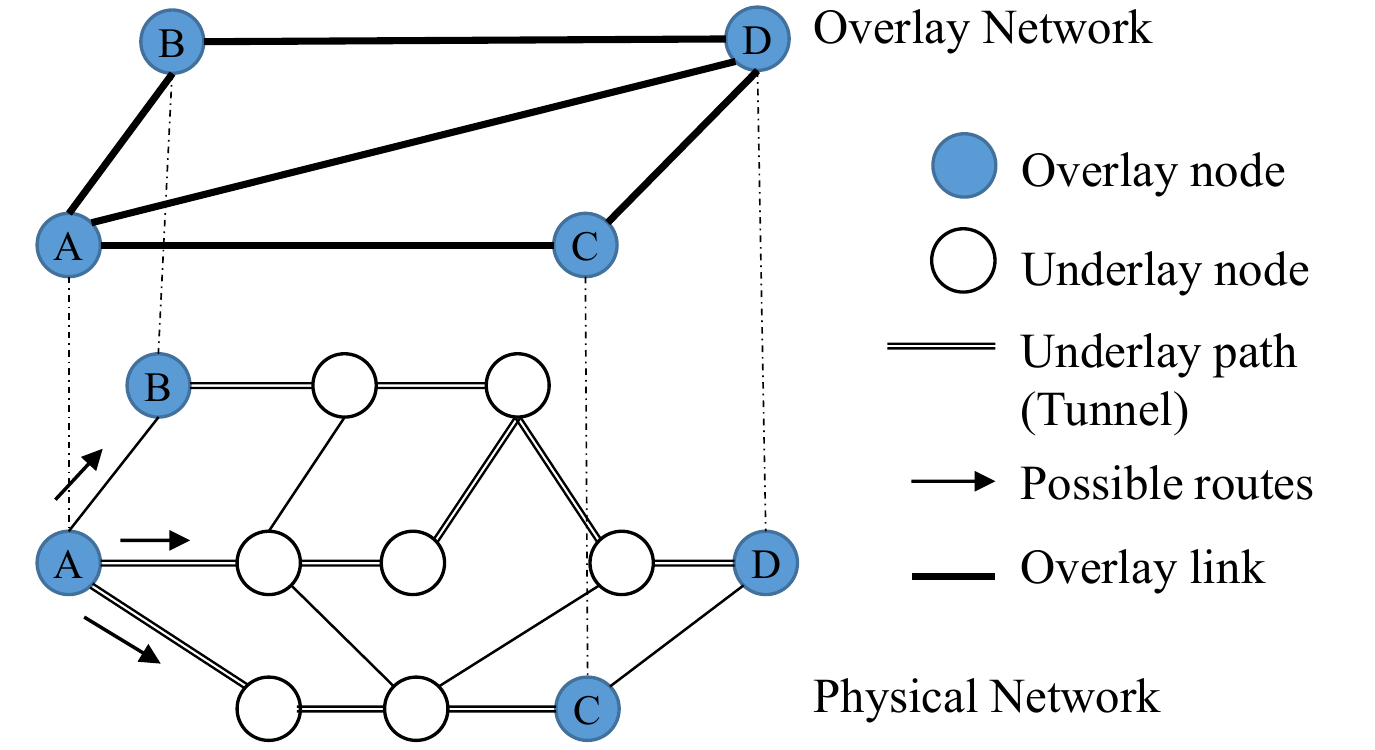}
\caption{Overlay network architecture. If the overlay node A has traffic for node D, it can either route it directly using the tunnel from A to D  or relay it through other overlay node B or C.}
\label{fig: overlay_underlay_network}
\end{figure}

In \cite{nathan}, the authors study the problem of placing the minimum number of overlay nodes into an existing underlay in order to maximize network throughput. In particular, the authors show that with just a few overlay nodes, maximum network throughput can be achieved. However, \cite{nathan} also  
shows that the backpressure routing algorithm of \cite{tassiulas}, which is known to be optimal in a wide range of scenarios, leads to a loss in throughput when used in an overlay network. Then the authors of \cite{nathan} proposes a heuristic for optimal routing called the Overlay Backpressure Policy (OBP). An optimal backpressure like routing algorithm for a special case, where the underlay paths do not overlap with each other, was given in \cite{georgios}. This paper also proposes a threshold based heuristic for general overlay networks. The schemes presented in \cite{nathan} and \cite{georgios} are very similar and were conjectured to be throughput optimal. 

In this paper, we provide a counterexample to show that OBP is in fact not throughput optimal and develop a new optimal routing policy. To derive the optimal policy, we notice that the suboptimality of backpressure arises from its failure to accurately account for congestion in the underlay paths. Traditional backpressure doesn't keep track of the packets in the underlay which can lead the overlay nodes to send too many packets into the underlay creating instability of underlay queues. We will first develop a centralized solution which achieves optimality by limiting the traffic injected into the underlay so that the underlay queues are always bounded. Then we use the intuition gained from this policy to develop a distributed solution which uses the queue backlog information in the underlay to compute the amount of flow transmitted into each underlay path. This policy implicitly favors underlay paths that are less congested and preserves stability of all the queues.

%We will observe that the centralized decisions are used to compare congestion along the underlay paths that share some links between them. Then we use a fluid model to develop a distributed policy that makes routing decisions based on the total backlog along the underlay paths. When a link is shared between two paths, this policy implicitly prioritizes the path with less congestion. This policy will be derived using the dual subgradient descent algorithm.

%The suboptimality of OBP is not as easy to see because it does use the number of packets in the underlay in weight computation. However, it only considers packets from the same commodity which is problematic as shown later in the Section III. We deal with this issue by using the sum of all the queues lengths along an underlay path, which include packets from all commodities, in the weight computation. 

This paper is organized as follows. We describe our model in the next section. In section III, we provide a counterexample to the OBP routing policy. Then in section IV, we provide a centralized stochastic policy that is throughput optimal for overlay networks. In section V, we develop a distributed policy based on the dual subgradient descent method that requires the underlay queue-lengths. In section VI, we propose three approaches to estimating the queue-lengths if they are not available to the overlay nodes. Finally, we verify the performance of our algorithm with extensive simulations. 
%Because of space limitations, some of the proofs have been moved to the technical report \cite{tech_report}.

\section{Model}
We model the network as a graph $(N, E)$ where $N$ is the set of nodes and $E$ is the set of directed links. The links are capacitated and the capacity of a link $(i,j)\in E$ is given by $c_{ij}$. The nodes can be of two types: underlay or overlay.  We represent the set of all underlay nodes by $U$ and the set of all overlay nodes by $\mathcal{O} = N \backslash U$. The network supports a set of commodities, $K$, where each commodity $k\in K$ is defined by a source-destination pair. For the ease of exposition, we will formulate the problem with all the sources and destinations being overlay nodes. In Section \ref{sec: underlay_sources}, we discuss how the same solution can be applied when this is not the case. The time is slotted and indexed by $t$. We remove the time index for notational simplicity if removing it doesn't create ambiguity.

\subsection{Overlay}
The overlay network consists of the controllable nodes $\mathcal{O}$ which are capable of implementing a dynamic routing algorithm. The links between two overlay nodes can either be a direct edge or a path through the underlay referred to as a {\em tunnel}. A tunnel $l$ is a sequence of nodes $l_1, l_2, \dots, l_{|l|}$ where $|l|$ is the length of the tunnel. We represent the set of all the tunnels in the network by $L$. 

%The overlay network consists of the controllable nodes $\mathcal{O}$ which are capable of implementing a dynamic routing algorithm. The set of links between two overlay nodes is represented by $E_{\mathcal{O}}$. An overlay link can either be a direct edge or a path through the underlay referred to as a {\em tunnel}. A tunnel $l$ is a sequence of nodes $l_1, l_2, \dots, l_{|l|}$ where $|l|$ is the length of the tunnel. We represent the set of all the tunnels in the network by $L$. Hence, $E_{\mathcal{O}} = L \cup \{(i,j) \in E:i,j \in \mathcal{O}\}$.

Since a tunnel connects two overlay nodes, $l_1$ and $l_{|l|}$ are overlay nodes, and $l_{|2|}, ..., l_{|l|-1}$ are underlay nodes. When a packet is sent into a tunnel $l$, node $l_1$ encapsulates it into a packet destined to node $l_{|l|}$ and forwards it onto the underlay node $l_2$. The route taken by the tunnel is dictated by the path from $l_2$ to $l_{|l|}$ which is assigned by the underlay. When the packet reaches $l_{|l|}$, it is decapsulated and enqueued at the node. An example of the different type of links in an overlay network is given in Figure \ref{overlay_queues}. This overlay network consists of one direct link (1,4) and three tunnels (1,3,4), (2,3,4) and (2,3,5).

\begin{figure}[ht]
\centering
\begin{tabular}{c}
\subfloat[Example of overlay and underlay queues in the physical network. The variables labelling the edges represent the number of packets transmitted for each commodity on each tunnel.]{
\includegraphics[scale=.4]{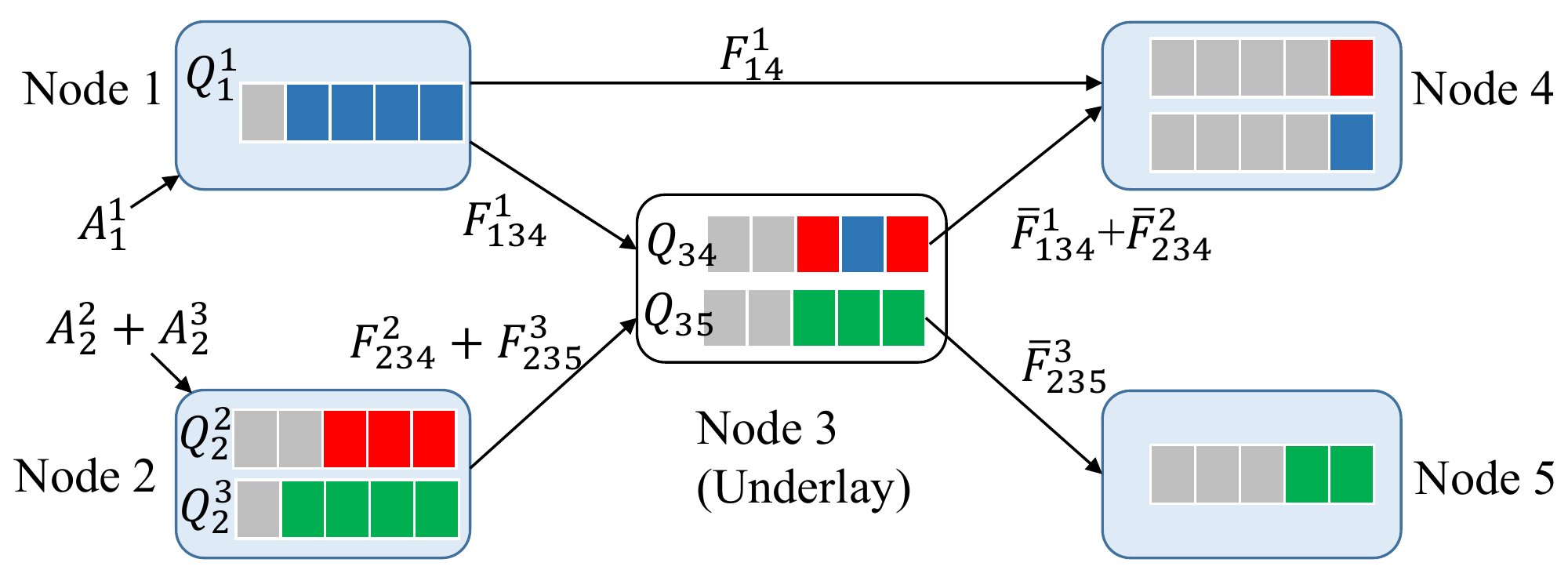}
}\\
\subfloat[Corresponding overlay network. Each tunnel in the overlay network is represented by a sequence of nodes traversed by it.]{
\makebox[7cm][c]{
	\begin{tikzpicture}[-, >=stealth', auto, semithick ]
	
	\tikzstyle{every state}=[thick,text=black,scale=.5,color=mycolor, fill=mycolor,]
	
	\node[state,text=black] (1) at (0,1) {\large $1$};
	\node [state,text=black] (2) at (0,-1) {\large $2$};
	\node[state,text=black] (4) at (4,1) {\large $4$};
	\node[state,text=black] (5) at (4,-1) {\large $5$};

	\draw [->,black](1) -- (4) node [black, midway] {\small (1,4)};
	\draw [->,blue]  (1) to [bend right] (4) node [black, midway] {\small \qquad \qquad (1,3,4)};
	\draw [->,red] (2) to [in=-120] (4) node [black,below=0.6cm] {\small (2,3,4) \qquad \qquad };
	\draw [->,green] (2) -- (5) node [black,midway,below] {\small (2,3,5)};
	\end{tikzpicture}
}
}
\end{tabular}
\caption{A physical network and its corresponding overlay network.}
\label{overlay_queues}
\end{figure}

Each overlay node $i$ maintains a queue for each commodity $k$ and the backlog is represented by $Q^k_i$. The number of external commodity $k$ packets that arrive at node $i$ represented by $A^k_i$. Let $F^k_{l}$ represents the amount of packets injected into the tunnel $l$, and $\bar{F}^k_{l}$ represents the number of packets that exit tunnel $l$. The quantities are different because a packet sent into the tunnel might not exit the tunnel for several time-steps. Let $F^k_{ij}$ represent the number of commodity $k$ packets that are transmitted on an overlay to overlay link $(i,j)$. Figure \ref{overlay_queues} illustrates the meaning of these variables on a simple network. The backlog of commodity $k$ packets at overlay node $i$ evolves as follows:
\begin{align}
Q^k_i(t+1) &= \left[ Q^k_i(t) - \sum_{j\in \mathcal{O}} F^k_{ij}(t)- \sum_{l\in L: l_1 = i} F^k_{l}(t) + \right. \nonumber \\
	&\qquad \qquad\left.  \sum_{j\in \mathcal{O}} F^k_{ji}(t) + \sum_{l\in L:l_{|l|}=i} \bar{F}^k_{l}(t) + A^k_i (t) \right]^+  \nonumber
\end{align}
Here, $\{l\in L: l_1 = i\}$ are all the tunnels that start at node $i$, $\{l\in L:l_{|l|}=i\}$ are the tunnels that end at node $i$, and $[.]^+ = \max(.,0)$. Packets are removed at the destination node, hence the backlog of a commodity at its destination is zero.

We assume that all the traffic arrivals $A^k_i$ are i.i.d. with a mean of $\lambda^k_i$. We also assume that the arrival rate vector $\lambda$ is in the interior of the throughput region of the overlay network $\Lambda$ \cite{nathan}. We will be designing a dynamic routing policy that controls $F^k_{l}$ and $F^k_{ij}$ at each time-step so that both the overlay and the underlay queues stabilize.

%The stability region $\Lambda$ is the set of all arrival rate vectors that are stabilizable by some algorithm under the constraints enforced by the network. If $\Lambda'$ is the stability region of the network where all the nodes are underlay, $N=U$, and if $\Lambda''$ is the stability region of the network wehre all the nodes are overlay, $N=\mathcal{O}$, then $\Lambda' \subseteq \Lambda \subseteq \Lambda''$. Our goal is to develop a decentralized, dynamic policy that stabilizes all $\lambda \in \Lambda$.

\subsection{Underlay}
The underlay network consists of the uncontrollable nodes $U$. These nodes have a static routing policy which assigns a fixed path between each pair of nodes in the underlay. 
The paths are assumed to be acyclic and unique, which ensures that all the tunnels are acyclic an that they take a fixed route through the underlay.

An underlay node maintains a queue per outgoing link. The backlog on the queue associated with the link $(a,b)$ is represented by $Q_{ab}$. The queues have infinite buffer space hence packets are not dropped. When a packet arrives at an underlay node, the node looks up the link assigned to it based on its destination and enqueues it on the corresponding link. Since several tunnels of the overlay network can pass through the same underlay link the underlay queues accumulates packets from several different tunnels and commodities. An example of an underlay queue that is shared by several tunnels is presented in Figure \ref{overlay_queues}. Packets from both the tunnels (1,3,4) and (2,3,4) are queued on the link (3,4).

The underlay employs a work-conserving forwarding scheme that is ``universally stable'' as defined in \cite{andrews}. This assumption ensures that if the number of packets injected into the underlay at each timeslot satisfies the capacity constraints of the tunnels, then the underlay queues are deterministically bounded. Specifically, under a universally stable forwarding policy, an underlay queue corresponding to link $(a,b)$ is always deterministically bounded if
\begin{equation}
\sum_{l\in L:(a,b)\in l} \sum_k F^k_l(t) <c_{ab} \forall t.  \label{eq: tunnel_cap}
\end{equation}
Here $\{l\in L:(a,b)\in l\}$ is the set of tunnels that pass through the link $(a,b)$. We refer to such constraints as the {\em tunnel capacity constraints}. Several work-conserving policies that are universally stable are given in \cite{andrews}.

%An example of the queues in a network with overlay and underlay nodes is given if Figure \ref{overlay_queues}. In this network, Node 1 can use two paths to send its packets to Node 4: directly using the link (1,4) and indirectly using the tunnel (1,3,4). To use the tunnel, Node 1 encapsulates the packet so that its destination is 4, and hands it over to the underlay Node 3. Because the route to node 4 is fixed in the underlay, it enqueues the packet on $Q_{34}$ and eventually forwards it to Node 4. On the other hand, the packets going from Node 2 to 4 and from Node 2 to 5 have only one route.

\section{Background}
The problem of optimal routing in an overlay network was first studied in \cite{nathan}, where it was shown that backpressure routing, which is known to be throughput optimal in a range of scenarios, is not optimal for overlay networks, and proposed a heuristic called the Overlay Backpressure Policy (OBP). The OBP heuristic was conjectured to be throughput optimal.

For each tunnel $l$ and commodity $k$ OBP keeps track of the {\em packets in flight} $H^k_l$, which is the number of packets that have been transmitted into the tunnel by node $l_1$ but haven't reached node $l_{|l|}$. The weight for each commodity over the tunnel $W^k_{l}(t)$  is computed as follows $$W^k_{l}(t) = Q^k_{l_1}(t) - H_{l}^k (t)- Q_{l_{|l|}}^k(t).$$ 
A link $(i,j)$ that connects two overlay nodes can be thought of as a tunnel $l=(i,j)$ with no underlay node, hence the weight is computed as $$W^k_{l}(t) = Q^k_{l_1}(t) - Q_{l_{|l|}}^k(t).$$ Then, the commodity with the highest weight sends its packets into the tunnel provided that the weight is positive. A precise description of the OBP is given in Algorithm \ref{alg: obp}. 

\begin{algorithm}[h!]
\caption{Overlay Backpressure Policy (OPB):}
\label{alg: obp}
For each tunnel $l$ at each time-step $t$:
\begin{enumerate}
\item Compute the commodity $k^*$ that maximizes the weight $W^k_{l}(t)$, 
$$k^* \in \arg \max_k W_{l}^k(t).$$ 
Ties are broken arbitrarily.
\item Transmit $\mu$ packets into the tunnel where 
$$\mu = \left\{ \begin{array}{l} c_{l_1l_2} \text{ if }W_{l}^{k^*}(t) > 0\\ 0, \text{ otherwise,} \end{array} \right.$$ 
where $c_{l_1l_2}$ is the capacity of the first link of tunnel $l$.
\end{enumerate}
\end{algorithm}

This policy makes sense intuitively because it encourages utilizing the tunnels that have less packets in them. When a tunnel is congested, the number of packets in flight is high, which encourages the overlay nodes to use alternate routes and send packets into the tunnel only when the backlog in the overlay is extremely high. This behavior is common to backpressure-based optimal routing algorithms. Moreover, OBP reduces to backpressure routing when all the nodes are overlay nodes.

%However, the calculation of the weight in one tunnel does not account for the packets in other tunnels that might share some links with it. A tunnel can have a high weight and still be congested because of the packets from a different tunnel. In such a situation OBP allows packets to be sent into these tunnels exacerbating the congestion.

We present the following counterexample to show that the OBP is not throughput optimal. Consider a network topology given in Figure \ref{fig: counter_topology} where all the links are unit capacity. There are three commodities with source $s_i$ and destination $d_i$, $i=1,2,3$. The source and the destination are overlay nodes, whereas the nodes 1, 2 and 3 (in gray) are underlay nodes. The underlay nodes use the FIFO queuing discipline\footnote{From \cite{tassiulas93} we know that FIFO is throughput optimal for a ring which is the underlay topology in this example.}. Each commodity in this network has two tunnels to the destination, e.g. $(s_1, 1, 2, d_1)$ and $(s_1, 3, 1, 2, d_1)$. Note that the shorter tunnels do not share any links between them. So, if the shorter tunnel is chosen by each commodity, this network can support an arrival rate vector of $[1,1,1]$. 

\begin{figure}[ht]
\centering
\subfloat[Topology]{
\includegraphics[scale=.6]{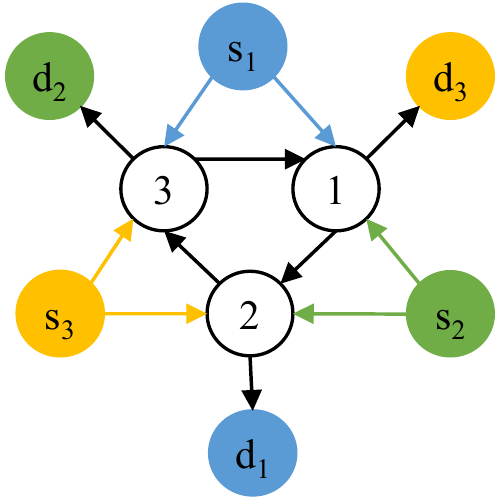}
\label{fig: counter_topology}
}
\subfloat[{Total backlog in the network for arrival rate vector of $[0.8, 0.8, 0.8]$ }] {
\includegraphics[scale=.6, trim=0 12cm 25cm 0]{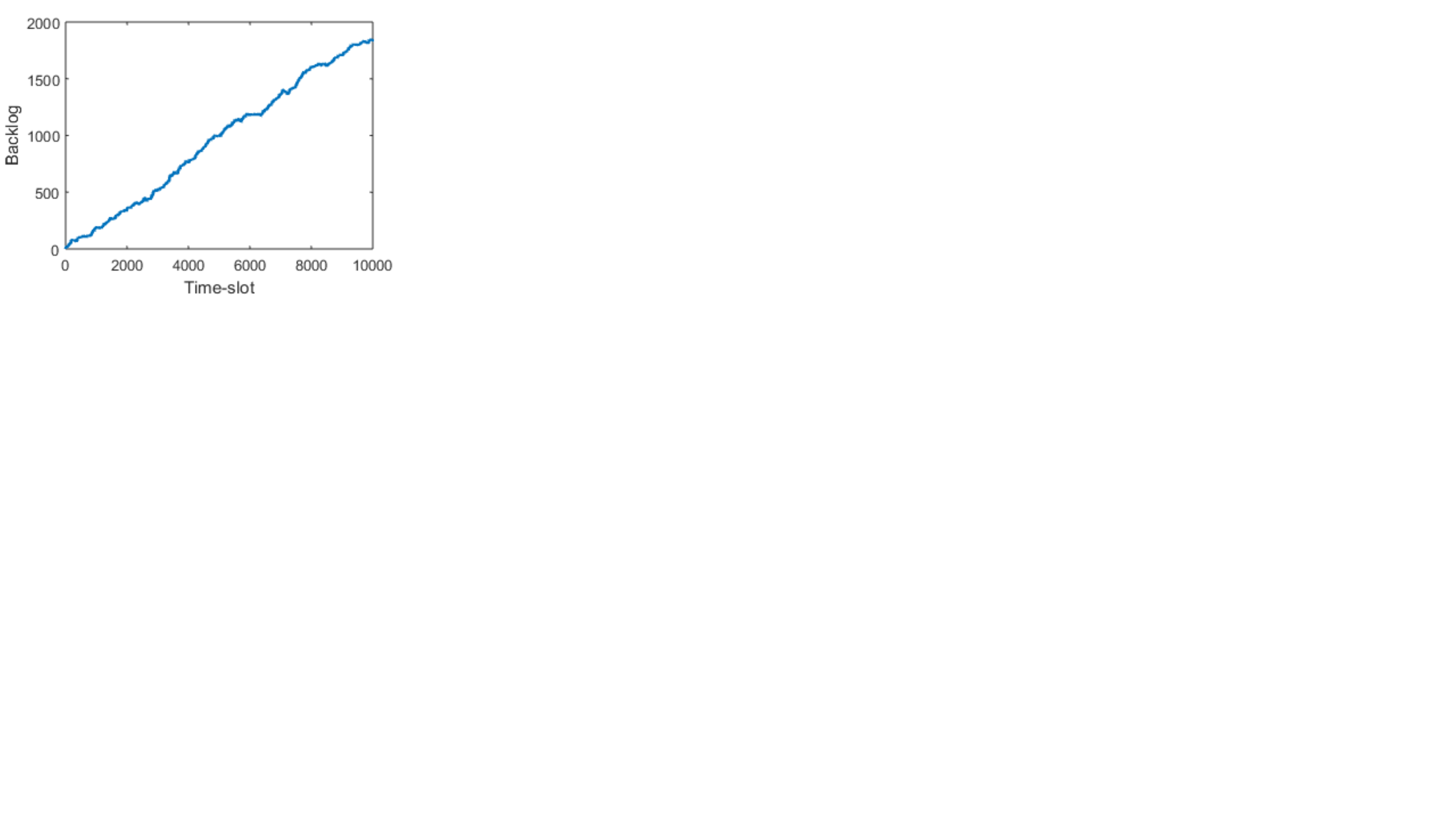}
\label{fig: counter_plot}
}
\caption{Counterexample for throughput optimality of the Overlay Backpressure Policy of \cite{nathan}.}
\end{figure}

Let us consider Poisson arrivals with the rate vector of $[0.8, 0.8, 0.8]$, which is clearly inside the stability region. To support this rate OBP has to send most of its traffic through the shorter tunnels. However, as we show below, congestion can lead traffic to use longer tunnels, which leads to instability.  A simulation result showing this instability is given in Figure \ref{fig: counter_plot}.

This instability is caused by overlapping tunnels where congestion in one tunnel forces commodities to use the longer tunnels which in turn leads to more congestion. Consider the situation where the number of packets in flight is large for tunnel $(s_1,1,2,d_1)$. So, commodity 1 traffic is routed through the tunnel $(s_1,3,1,2,d_1)$. This means that the link (3,1) is being used by commodity 1 packets, which creates congestion for commodity 3 over tunnel $(s_3,3,1,d_3)$ forcing commodity 3 traffic onto tunnel $(s_3,2,3,1,d_3)$. This problem continues for the tunnels used by commodity 2, which in turn create congestion for the tunnels of commodity 1 forcing its traffic onto tunnel $(s_1,3,1,2,d_1)$ further exacerbating the situation. This cyclical nature of increased congestion makes all the commodities unstable.

\section{Centralized solution}
We begin by providing a centralized optimal routing policy for overlay networks. In this section we assume that the underlay topology is known, and a centralized controller can make the routing decisions at the overlay. The key to obtaining a throughput optimal policy is to realize that the underlay cannot make dynamic decisions, hence, the overlay necessarily has to take into account the capacities of the underlay links while making scheduling decisions.

 Our algorithm works by choosing the scheduling decision which minimizes the $T$-slot drift of the quadratic Lyapunov function of the overlay queues \cite{neely_book}. This is similar in spirit to the backpressure routing algorithm, which implements a schedule that minimizes the Lyapunov drift at every slot. In our set-up, multislot drift needs to be considered since packets that are sent into the tunnel take several timeslots to come out of the tunnel.  In addition to minimizing the drift, we also have to make sure that the underlay queues are bounded. Because we assume that the underlay forwarding scheme is universally stable, we are able to guarantee that underlay queues are bounded once the tunnel capacity constraints (\ref{eq: tunnel_cap}) are met. Thus the algorithm seeks a scheduling decision that minimizes the drift subject to the tunnel capacity constraints.
  
To simplify the notation, in this section, a link between two overlay nodes will be viewed as a tunnel which does not comprise of any underlay nodes. We divide the time into $T$-slot duration frames and consider minimizing the $T$-slot drift. At the beginning of each frame, we solve the optimization problem~\eqref{prob} in a centralized fashion. The solution to~\eqref{prob} minimizes the drift of a quadratic Lyapunov function, while simultaneously satisfying the tunnel capacity constraints. The solution gives us ${F^k_l}^*$, which is the number of packets commodity $k$ must send into tunnel $l$ in order to minimize the drift. A complete description of the policy is given in Algorithm \ref{alg: centralized}.

\begin{algorithm}
\caption{Centralized Policy}
\label{alg: centralized}

\begin{enumerate}
\item At the beginning of each frame solve the following optimization problem:
\begin{align}
{F^k_l}^* = \argmax_{F^k_l} &\sum_{k,l}F^k_{l}  [Q^k_{l_1}(t) - Q^k_{l_{|l|}}(t)] \label{prob}\\
s.t. &\sum_{l\in L:(a,b)\in l} \sum_k F^k_{l} \le c_{ab},  \forall (a,b) \in E  \label{eq: tunnel_constraint}\\
% &\sum_k F^k_{l} \le c_{l_1l_2}, \forall l \in L \nonumber \\
 & F^k_l \ge 0 \label{positivecons}
\end{align}

\item Send ${F_l^k}^*$ packets of commodity $k$ into tunnel $l$ each time slot in the frame. If ${F_l^k}^*$ is not an integer, approximate it by sending $p$ packets every $q$ slots so that  $\frac{p}{q} \approx {F_l^k}^*$.
\end{enumerate}

\end{algorithm}

In the special case when a tunnel $l$ does not share any link with other tunnels, we see that ${F_l^k}^\star$ can be computed independently of the other tunnels. The commodity $k^*$ is the one with the highest differential backlog $Q^k_{l_1}(t) - Q^k_{l_{|l|}}(t)$ and $F_l^{k^*}$ is chosen to be the capacity of the smallest link in the tunnel. Thus our algorithm resembles the backpressure routing except for the fact that the packets can face large delays while passing the underlay. However, when there is a shared link, all the tunnels that share the link are required to exchange their backlog information. 

%The key challenge in designing an optimal policy is that the packets belonging to different tunnels compete for link capacities in the underlay, which combined with the fact that the underlay is uncontrollable, can cause the end-to-end delay across the tunnel to be arbitrarily large. The assumption that the underlay forwarding policy is stable, combined with the fact that the traffic injected into the tunnel is less than the link capacities (constraint (\ref{eq: tunnel_constraint})) the queue sizes in the underlay are deterministically bounded \cite{andrews}. Thus, the packets exit the tunnel after a bounded amount of time. 

Next we show that the Algorithm~\ref{alg: centralized} stabilizes the network queues if the arrival rate vector $\lambda$ lies in the interor of the stability region. We use $T$-slot Lyapunov drift analysis to prove that these queues are {\em strongly stable} \cite{neely_book}.
\begin{theorem} \label{thm: centralized}
For any arrival rate vector $\lambda$ in the stability region $\Lambda$ and a large enough frame length $T$, the policy given in Algorithm \ref{alg: centralized} stabilizes all the queues in the network.
\end{theorem}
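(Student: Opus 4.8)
The plan is to run the $T$-slot Lyapunov drift argument of \cite{neely_book} on the overlay queues, together with the deterministic bound that the universal-stability assumption provides for the underlay. Let $L(t)=\sum_{k,i}\bigl(Q^k_i(t)\bigr)^2$ and, with $t$ denoting the start of a frame, let $\Delta_T(t)=\mathbb{E}\bigl[L(t+T)-L(t)\mid Q(t)\bigr]$ be the $T$-slot conditional drift. The target is to exhibit, for $T$ large enough, constants $B>0$ and $\delta>0$ with
\[
\Delta_T(t)\le B-\delta\sum_{k,i}Q^k_i(t),
\]
from which the Lyapunov drift theorem gives strong stability of the overlay queues. The underlay is handled at the outset: Algorithm~\ref{alg: centralized} keeps the tunnel capacity constraints~\eqref{eq: tunnel_cap} satisfied in every slot, so by the universal-stability assumption every underlay queue, and hence the number of in-flight packets $H^k_l(t)$ in each tunnel, is deterministically bounded by a network-dependent constant. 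This settles the underlay queues and will also bound the error terms in the drift.

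Next I would obtain a per-frame sample-path bound on the overlay queues. Treating an overlay-to-overlay link as a degenerate tunnel with no underlay node and summing the queue recursion over the $T$ slots of a frame, the constancy of ${F^k_l}^*$ within the frame and the identity (total packets leaving tunnel $l$ over the frame) $=T{F^k_l}^*+H^k_l(t)-H^k_l(t+T)$ yield
\begin{align*}
Q^k_i(t+T) &\le \Bigl[\,Q^k_i(t)-T\sum_{l:\,l_1=i}{F^k_l}^*\,\Bigr]^+ \\
&\quad + \sum_{\tau=t}^{t+T-1}A^k_i(\tau)+T\sum_{l:\,l_{|l|}=i}{F^k_l}^*+C^{k,i},
\end{align*}
where $|C^{k,i}|$ is at most a network constant times the in-flight bound. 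Squaring, summing over $k,i$, and taking the conditional expectation --- the frame's arrivals are independent of $Q(t)$, have mean $\lambda^k_i$ and finite second moment, so the squared terms contribute $O(T^2)$ --- and then grouping the flow terms along tunnels via $\sum_{k,i}Q^k_i\bigl(\sum_{l:\,l_{|l|}=i}{F^k_l}^*-\sum_{l:\,l_1=i}{F^k_l}^*\bigr)=-\sum_{k,l}{F^k_l}^*\bigl(Q^k_{l_1}-Q^k_{l_{|l|}}\bigr)$, I get
\begin{align*}
\Delta_T(t) &\le KT^2+2C\sum_{k,i}Q^k_i(t)+2T\sum_{k,i}Q^k_i(t)\lambda^k_i \\
&\quad -2T\sum_{k,l}{F^k_l}^*\bigl(Q^k_{l_1}(t)-Q^k_{l_{|l|}}(t)\bigr),
\end{align*}
with $K,C$ depending only on the network.

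To finish I would use the optimality of ${F^k_l}^*$ in~\eqref{prob}. Since $\lambda$ is in the interior of $\Lambda$, there are $\epsilon>0$ and a static tunnel flow $\{f^k_l\ge0\}$ supporting $\lambda+\epsilon\mathbf{1}$: it satisfies the tunnel capacity constraints~\eqref{eq: tunnel_constraint} --- hence is feasible for~\eqref{prob} --- and obeys $\sum_{l:\,l_1=i}f^k_l-\sum_{l:\,l_{|l|}=i}f^k_l\ge\lambda^k_i+\epsilon$ at every overlay node $i$ other than the destination of commodity $k$. Feasibility gives $\sum_{k,l}{F^k_l}^*\bigl(Q^k_{l_1}-Q^k_{l_{|l|}}\bigr)\ge\sum_{k,l}f^k_l\bigl(Q^k_{l_1}-Q^k_{l_{|l|}}\bigr)$, and rewriting the right side along tunnels and using the flow balance of $\{f^k_l\}$ yields $-2T\sum_{k,l}f^k_l\bigl(Q^k_{l_1}-Q^k_{l_{|l|}}\bigr)\le-2T\sum_{k,i}Q^k_i(t)(\lambda^k_i+\epsilon)$. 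Substituting, the $\lambda^k_i$ contributions cancel and we are left with $\Delta_T(t)\le KT^2-(2T\epsilon-2C)\sum_{k,i}Q^k_i(t)$; taking the frame length $T>C/\epsilon$ makes $\delta:=2T\epsilon-2C>0$, which is the desired negative drift. Strong stability of the overlay queues then follows from the drift theorem of \cite{neely_book} (using also that $L$ changes by at most an $O(T)$ amount within a frame), and the underlay queues are deterministically bounded as already noted.

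I expect the main obstacle to be the bookkeeping of in-flight packets: a packet injected into a tunnel leaves only several slots later, so the amount leaving a tunnel over a frame differs from $T{F^k_l}^*$ by $H^k_l(t)-H^k_l(t+T)$, and one must invoke the universal-stability bound to argue this discrepancy is a bounded constant; this is precisely what forces the ``large enough $T$'' hypothesis, since $T\epsilon$ must beat that constant. Two smaller points also need care: checking that the tunnel-injection variables of an interior-point static flow are feasible for~\eqref{prob} so that the maximum there dominates its value at $\{f^k_l\}$; and checking that the sample-path bound survives when an overlay queue is too small to transmit the full scheduled amount, which it does because $\bigl(\max[x-b,0]\bigr)^2\le x^2-2xb+b^2$ for all $x,b\ge0$, so the scheduled rather than the realized rates can be used throughout.
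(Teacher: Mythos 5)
Your proposal is correct and follows essentially the same route as the paper: a $T$-slot quadratic Lyapunov drift argument in which the universal-stability assumption bounds the underlay backlog (and hence the in-flight discrepancy) by a constant, and the drift of the per-frame optimizer ${F^k_l}^*$ is compared against a static flow supporting $\lambda+\epsilon\mathbf{1}$, which is feasible for~\eqref{prob}, yielding negative drift once $T\epsilon$ exceeds that constant. The only cosmetic difference is that the paper packages the comparison as an explicit stationary randomized policy $\pi$ with its own lemma, whereas you invoke feasibility in~\eqref{prob} directly; the underlying argument is identical.
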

\begin{proof}
The proof of the theorem is given in the Appendix. %technical report \cite{tech_report}.
\end{proof}

\section{Fluid Formulation and Distributed Solution}
The centralized policy in the previous section requires the knowledge of the underlay topology which might not be known to the overlay. Moreover, having a centralized controller is often impractical. We now consider the fluid model of the network and develop a decentralized policy. Fluid models have been successfully utilized to establish the stability of queueing networks (e.g. \cite{bramson,dai1995}). 

 Let $f_{ij}^k$ be the flow assigned to commodity $k$ on the link $(i,j)\in E$, and $f_l^k$ be the flow assigned to commodity $k$ on the tunnel $l\in L$. Let $f$ denote the vector containing all the flow variables. The arrival rate of commodity $k$ at overlay node $i$ is represented by $\lambda^k_i$, and we assume that the vector of arrival rates $\lambda$ is in the interior of the stability region. For simplicity, we will assume $\lambda$ to be a constant, however, if it is time-variying, we note that the technical results hold as long as the arrival rate is bounded at each time-step and the expected value $E[\lambda(t)]$ exists. The problem of stabilizing the network queues can be formulated as a linear program that finds a feasible flow allocation on all the links and tunnels,
\begin{align}
\max & \text{ 0 } \nonumber \\
s.t. &\sum_{l:(i,j)\in l} \sum_k f^k_{l} \le c_{ij},  \forall (i,j): i\in U, j\in N  \label{eq: tunnel_constraint_fluid}\\
&\sum_{l:(i,j)\in l} \sum_k f^k_{l} \le c_{ij},  \forall (i,j): i\in \mathcal{O}, j\in U  \label{eq: cap1_fluid}\\
 & \sum_j f_{ij}^k + \sum_{l:l_1=i} f^k_l - \sum_{j} f^k_{ji} - \nonumber \\
		&\qquad  \qquad \qquad \sum_{l:l_{|l|} = i} f_l^k - \lambda^k_i \ge 0, \forall i\in \mathcal{O},k \label{eq: flow_conservation_fluid}\\
 &\sum_k f^k_{ij} \le c_{ij}, \forall i,j \in \mathcal{O} \label{eq: cap2_fluid}\\
 &f^k_{ij}, f^k_l  \ge 0, \label{eq: non-negativity_fluid}
\end{align}
Here, the inequalities (\ref{eq: tunnel_constraint_fluid}) are the tunnel capacity constraints which are the fluid version of (\ref{eq: tunnel_cap}). Each one of these constraints correspond to an uncontrollable link, i.e. a link between two underlay nodes or a link that goes from underlay to an overlay node. Inequalities (\ref{eq: cap1_fluid}) are the link capacity constraints corresponding to the first link in the tunnel, i.e. the links that go from an overlay node to an underlay node. This link is responsible for controlling the rate received by the underlay links. Inequalities (\ref{eq: flow_conservation_fluid}) are the flow conservation constraints on the overlay network. Note that the flow conservation constraints are not required for the underlay because for each tunnel $l$, there is a single route and the flows coming into the underlay are feasible because of (\ref{eq: tunnel_constraint_fluid}). That is, for a tunnel $l$, when $f$ is a feasible solution,
$$f^k_l = f^k_{l_1,l_2} = f^k_{l_2,l_3} = ... = f^k_{l_{|l|-1}, l_{|l|}}.$$
Constraints (\ref{eq: cap2_fluid}) are the capacity constraints for the overlay links.

\subsection{Dual problem}
We now formulate the dual problem so that it can be solved with the subgradient descent method \cite{low, bertsekas_convex}.  Let $q_{ij}$ and $q_i^k$ denote the dual variables for the tunnel constraints (\ref{eq: tunnel_constraint_fluid}) and the flow conservation constraints (\ref{eq: flow_conservation_fluid}) respectively, and let $q$ represent the vector containing all the dual variables. The Lagrangian function is given by, 
\begin{align}
L(f, q ) &= \sum_{(i,j):i\in U}q_{ij} \left( c_{ij} - \sum_{l: (i,j)\in l} \sum_k f^k_{l} \right) + \sum_{i\in \mathcal{O}, k} q^k_i  \nonumber\\
		&\left(\sum_j f_{ij}^k + \sum_{l:l_1=i} f^k_l - \sum_{j} f^k_{ji} - \sum_{l:l_{|l|} = i} f_l^k - \lambda^k_i  \right) \nonumber \\
&= \sum_l \sum_k f^k_l \left(q^k_{l_1}-\sum_{(i,j)\in l:i \in U}q_{ij}-q^k_{l_{|l|}}\right) + \nonumber\\
& \sum_{(i,j)} \sum_k f^k_{ij} ( q^k_{i} - q^k_{j})  + \sum_{(i,j):i\in U} q_{ij}c_{ij} - \sum_{i\in \mathcal{O}, k} q^k_i \lambda^k_i \label{eq: optimal_f},
\end{align}
where the second equality is obtained by rearranging the terms so that the flow variables are factored out instead of the dual variables.

Let $X$ be a set such that any $f \in X$ satisfies the constraints (\ref{eq: cap1_fluid}), (\ref{eq: cap2_fluid}) and (\ref{eq: non-negativity_fluid}). Note that these constraints can be enforced locally by an overlay node using only locally available information. This property will be essential in designing the decentralized algorithm. 
The dual objective function corresponding to the problem~\eqref{eq: tunnel_constraint_fluid} is $$D(q) = \max_{f \in X} L(f,q).$$
The dual problem is given by, 
\begin{align}
\min_q &\quad D(q) \label{dual_problem}\\
\text{s.t.} &\quad q \ge 0. \nonumber
\end{align}
Since the primal problem~\eqref{eq: tunnel_constraint_fluid} is a linear program, the duality gap is zero (Slater's condition~\cite{bertsekas_convex}). Hence, solution of the dual~\eqref{dual_problem} yields a feasible flow allocation.

\subsection{Distributed solution} \label{sec: solution}
The subgradient method works by initializing the dual variables with a value $q(0)\ge 0$, and then iterating on them until it converges to optimal $q^\star$. Each iteration involves computing a subgradient $g$ of $D$ at the current value of the dual variables, then updating the dual variables as follows:
\begin{equation} q(t+1) = \left[q(t) - \alpha(t) g(t) \right]^+.\label{eq: dual_update} \end{equation}
Here $\alpha(t)$ is positive scalar step-size. The dual variables are known to converge to the optimal $q^\star$ if the step-sizes $\alpha(t)$ are chosen appropriately. However, if $\alpha(t)\equiv \alpha$, then the iterates~\eqref{eq: dual_update} converge to a bounded neighbourhood of $q^\star$~\cite{bertsekas_convex}.

Let ${f^k_l}^*$ and ${f^k_{ij}}^*$ be the values of flow variables which maximize the Lagrangian $L(f,q)$ over $f \in X$ for a fixed $q$, i.e. $D(q) = L(f^*,q)$. From \cite{bertsekas_convex} we know that a subgradient of $D(q)$ is given by a vector $g$ with entries as,
\begin{align}
& g_{ij} = c_{ij} - \sum_{l: (i,j) \in l}  \sum_k {f^k_{l}}^*,  \text{ and } \label{eq: subgradient1}\\
&g^k_{i} =\sum_j {f_{ij}^k}^* + \sum_{l:l_1=i} {f^k_l}^* - \sum_{j} {f^k_{ji}}^* - \sum_{l:l_{|l|} = i} {f_l^k}^* - \lambda^k_i. \label{eq: subgradient2}
\end{align}
Now we can use the recursive equation (\ref{eq: dual_update}) to update the dual variables.

The only necessary step that we haven't covered so far is the computation of ${f^k_l}^*$ and ${f^k_{ij}}^*$. A careful observation of equation (\ref{eq: optimal_f}) and the set $X$ shows that this is a simple optimization problem that can be solved in a decentralized fashion. The objective is a weighted sum of the flow variables, and the constraints that form $X$ are the link capacity constraints. At a high level, for each overlay link, the solution chooses the maximum value of the flow variable that corresponds to the commodity with the highest positive weight. A complete algorithm to compute the optimal flow variables and update the dual variables is given in Algorithm \ref{alg: OORP}.

\begin{algorithm}[h!]
\caption{Optimal Overlay Routing Policy (OORP)}
\label{alg: OORP}
At each time-step $t$, overlay node $i$ does the following: 

{\bf Optimal flow variables computation (used to obtain the subgradients):}

An overlay to overlay link $(i,j)$ computes the flow variables ${f^k_{ij}}^*$:
\begin{itemize}
\item Let $k^{opt} \in \argmax_k q^k_i - q^k_j,$ ties are broken arbitrarily. The weight of commodity $k^{opt}$ in this link is $W^{opt}_{ij} = q^{k^{opt}}_i - q^{k^{opt}}_j$.
\item For $k=k^{opt}$, 
    $${f^{k}_{ij}}^* = \left\{ \begin{array}{l}  c_{ij} \text{ if } W^{opt}_{ij} > 0 \\ 0, \text{ otherwise} \end{array} \right.$$
\item For all $k \ne k^{opt}$, ${f^k_{ij}}^* = 0$.
\end{itemize}

Each overlay to underlay link $(i,j)$ computes the flow variable $f^k_l$ for all $l:(l_1,l_2) = (i,j)$:
\begin{itemize}
\item Let 
\begin{equation}
(l^{opt},k^{opt}) \in \argmax_{l:(l_1,l_2) = (i,j),k} q^k_{l_1}-\sum_{(a,b)\in l:a \in U}q_{ab}-q^k_{l_{|l|}}. \label{eq: subgradient}
\end{equation}
Ties are broken arbitrarily. Let the weight of commodity $k^{opt}$ in the tunnel be $$W^{opt}_l = q^{k^{opt}}_{{l^{opt}}_1}-\sum_{(a,b)\in l^*:a \in U}q_{ab}-q^{k^{opt}}_{{l^{opt}}_{|l^{opt}|}}.$$
\item For $(l,k) = (l^{opt},k^{opt})$, $${f^k_l}^* = \left\{ \begin{array}{l} c_{ij} \text{ if } W^{opt}_l> 0 \\ 0, \text{ otherwise} \end{array} \right.$$
\item For all $(l,k):l \ne l^{opt}$ or $k \ne k^{opt}$, ${f^k_l}^* = 0$.
\end{itemize}

{\bf Data transmission: }\\
Transmit ${f_{ij}^k}^*$ amount of commodity $k$ traffic into each overlay to overlay link $(i,j)$ and ${f_{l}^k}^*$ amount of commodity $k$ traffic into each tunnel $l$.
\vspace{2mm}

{\bf Dual variables update:}\\
Performed by an overlay node i:
\begin{align}
q^k_i(t+1) &= \left[q(t) - \alpha(t) \left(\sum_j {f_{ij}^k}^* + \sum_{l:l_1=i} {f^k_l}^* \right. \right. \nonumber\\
		& \qquad \left. \left. - \sum_{j:(j,i)\in E} {f^k_{ji}}^* - \sum_{l:l_{|l|} = i} {f_l^k}^* - \lambda^k_i  \right)  \right]^+ \label{eq: overlay_q_fluid}
\end{align}
Performed by an underlay node i:
\begin{align}
q_{ij}(t+1) &= \left[q(t) - \alpha(t) \left( c_{ij} - \sum_{l: (i,j) \in l}  \sum_k {f^k_{l}}^* \right)  \right]^+ \label{eq: underlay_q_fluid}
\end{align}
\end{algorithm}

\subsection{Queue-lengths as dual variables}
The subgradient descent algorithm presented in the Algorithm \ref{alg: OORP} requires the network to explicitly keep track of the dual variables. In order to implement the algorithm in a decentralized fashion, each underlay node $i$ needs to maintain a dual variable $q_{ij}$ for each link $(i,j)$, and each overlay node $i$ needs to maintain a dual variable $q^k_i$ for each commodity $k$. This is a reasonable assumption for the overlay nodes, but not justified for the uncontrollable underlay. To get around similar problems of not having a dual variable, works such as \cite{low_vegas}, \cite{lin_shroff}, etc. have proposed approximating them with the corresponding queue lengths. The argument behind this procedure is that the subgradients are proportional to the change in queue-lengths, so that the queue-lengths will move in the same direction as the dual variables. Next, we give an example in which this proportionality does not hold. In spite of this issue, we show that the queue-lengths can provide a good approximation for the dual variables.

We first observe that the dual variable update equations (\ref{eq: overlay_q_fluid}) and (\ref{eq: underlay_q_fluid})  are the same as the queue update equations when the flows sent into the tunnels $f_l^k$ are feasible for the underlay, i.e. when no queues buildup in the underlay. But when the flows do not satisfy the tunnel capacity constraints, the underlay queues build up, and the flows get reduced from their initial value as they pass through the bottleneck links. This decrease in the flow size is not captured in these dual variable update equations (\ref{eq: overlay_q_fluid}), (\ref{eq: underlay_q_fluid}). Consider the simple network shown in Figure \ref{fig: not_proportional}. There is one commodity, $k=1$, with source node 1 and destination node 4, and a single tunnel $l=(1,2,3,4)$. Suppose that at a certain iteration, $q_1^1 > q_4^1$, hence ${f_l^k}^* = 3$. This flow into the tunnel gets bottlenecked at link $(2,3)$ so node 3 only receives a flow of 1. In this situation, equation (\ref{eq: underlay_q_fluid}) predicts that the queue-length for $q_{34}$ would increase because a flow of size 3 was sent into the tunnel and the capacity of the link is 2, however this queue can only decrease or stay unchanged at 0.

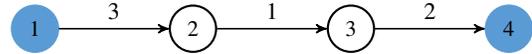
\begin{figure}[ht]
\centering
\begin{tikzpicture}[->, >=stealth', auto, semithick, node distance=3cm]

%\tikzstyle{every state}=[fill=mycolor,draw=mycolor,thick,text=black,scale=.7]
\tikzstyle{every state}=[thick,text=black,scale=.7]

\node[state, draw=mycolor, fill=mycolor, text=black] (1) {\large 1};
\node [state] (2)[right of=1] {\large 2};
\node[state] (3)[right of=2] {\large 3};
\node[state, draw=mycolor, fill=mycolor, text=black] (4)[right of=3] {\large 4};

\draw (1) -- (2) node [midway]{ \small 3};
\draw (2) -- (3) node [midway]{\small 1};
\draw (3) -- (4) node [midway]{\small 2};

\end{tikzpicture}

\caption{Link $(3,4)$ never builds a queue as the flow gets bottlenecked by (2,3).}
\label{fig: not_proportional}
\end{figure}

To capture this reduction of the flow sizes in the tunnel, we model the queuing in the network as follows:
\begin{align}
\hat{q}^k_i(t+1) &= \left[\hat{q}(t) - \alpha(t) \left(\sum_j {f_{ij}^k}^* + \sum_{l:l_1=i} {f^k_l}^* \right. \right. \nonumber\\
		& \qquad \left. \left. - \sum_{j:(j,i)\in E} {f^k_{ji}}^* - \sum_{l:l_{|l|} = i} \epsilon^k_l(i) {f_l^k}^* - \lambda^k_i  \right)  \right]^+ \label{eq: overlay_approx}\\
\hat{q}_{ij}(t+1) &= \left[\hat{q}(t) - \alpha(t) \left( c_{ij} - \sum_{l: (i,j) \in l}  \sum_k \epsilon^k_l(i,j) {f^k_{l}}^* \right)  \right]^+ \label{eq: underlay_approx}
\end{align}
where $\epsilon^k_l(i), \epsilon^k_l(i,j) \in [0,1]$ represent the reduction suffered by the corresponding flows before arriving at node $i$. These quantities are implicitly determined by the network at each time-step depending on the scheduling policy in the underlay. In the example presented above, for any work conserving scheme, $\epsilon_l^k(3,4) = 1/3$. We will show that for any value of $\epsilon$ in the set $[0,1]$ the queue-lengths will converge to the optimal dual variables. Let $g$ be the true subgradient of $D$ at $q$, and $\hat{g}$ be the approximate subgradient after the reduction, then we can represent the queuing equation as $$\hat{q}(t+1) = \left[\hat{q}(t) - \alpha(t) \hat{g}(t) \right]^+,$$ and $\hat{g} \ge g$.

Before we prove the convergence, we state the following preliminary lemma.
\begin{lemma} \label{lem: zero_solution}
The vector $q^*=0$ is an optimal solution to the dual problem (\ref{dual_problem}).
\end{lemma}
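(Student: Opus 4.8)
The plan is to show directly that $D(0) \le D(q)$ for every $q \ge 0$, which establishes that $q^*=0$ minimizes the dual. The essential reason this should work is that the primal linear program~\eqref{eq: tunnel_constraint_fluid} has a feasible point in the set $X$: namely, since $\lambda$ is assumed to lie in the interior of the stability region $\Lambda$, there is a flow allocation $f \in X$ satisfying all the constraints~\eqref{eq: tunnel_constraint_fluid}--\eqref{eq: cap2_fluid}, i.e. the ``maximize $0$'' primal is feasible with optimal value $0$. By zero duality gap (Slater, already invoked in the excerpt), the optimal dual value is also $0$, so $\min_{q\ge 0} D(q) = 0$. It then suffices to check that $D(0) = 0$, because $D(q) \ge 0$ for all $q\ge0$ would follow from the fact that $0$ is the minimum, and combined with $D(0)=0$ this forces $q^*=0$ to be a minimizer.

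Concretely, I would proceed as follows. First, evaluate $D(0)$: from the Lagrangian expression~\eqref{eq: optimal_f}, setting all dual variables to zero kills the terms $\sum_{(i,j):i\in U} q_{ij} c_{ij}$ and $\sum_{i\in\mathcal{O},k} q^k_i \lambda^k_i$, and also kills the weights $q^k_{l_1} - \sum q_{ij} - q^k_{l_{|l|}}$ and $q^k_i - q^k_j$ multiplying the flow variables; hence $L(f,0) = 0$ for every $f$, so $D(0) = \max_{f\in X} L(f,0) = 0$. Second, argue that $D(q) \ge 0$ for all $q \ge 0$: pick a primal-feasible $\tilde f \in X$ (which exists by the interior-of-stability-region assumption); then every term of the form $q_{ij}(c_{ij} - \sum_l\sum_k \tilde f^k_l)$ and $q^k_i(\sum_j \tilde f^k_{ij} + \dots - \lambda^k_i)$ is a product of a nonnegative dual variable with a nonnegative slack, so $L(\tilde f, q) \ge 0$, and therefore $D(q) = \max_{f\in X} L(f,q) \ge L(\tilde f, q) \ge 0$. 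Combining the two facts, $D(0) = 0 \le D(q)$ for all $q\ge 0$, which is exactly the claim.

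Alternatively — and perhaps more in the spirit of the paper — one can skip the explicit $D(q)\ge 0$ argument and simply cite strong duality: the primal optimal value is $0$, so by zero duality gap the dual optimal value is $0$; since $D(0)=0$ is attained at the feasible point $q=0$, this point is optimal. Either route is short.

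The only real subtlety — and the one step I would be careful about — is justifying that the primal LP is feasible, i.e. that $X$ contains a point satisfying~\eqref{eq: tunnel_constraint_fluid}--\eqref{eq: cap2_fluid}. This is not a triviality of linear algebra; it is precisely the statement that $\lambda \in \mathrm{int}(\Lambda)$, where $\Lambda$ is the throughput region characterized in~\cite{nathan}. I would state this reduction explicitly: $\lambda$ being supportable means by definition that some static flow decomposition routes all the traffic while respecting the tunnel capacity constraints~\eqref{eq: tunnel_cap} and the overlay link capacities, and that decomposition is exactly a feasible $f$ for the LP. Once feasibility is granted, the rest of the argument is the routine sign-chasing sketched above, so I do not expect any obstacle there.
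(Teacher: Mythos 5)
Your proposal is correct, and your primary route is genuinely more elementary than the paper's. The paper argues via complementary slackness: any feasible flow $f$ is primal-optimal (the objective is identically $0$), $q=0$ is dual-feasible, and the pair trivially satisfies complementary slackness, hence both are optimal. You instead compute $D(0)=0$ directly from the Lagrangian and then establish $D(q)\ge 0$ for all $q\ge 0$ by exhibiting a primal-feasible $\tilde f\in X$ and observing that $L(\tilde f,q)$ is a sum of products of nonnegative multipliers with nonnegative slacks --- this is just weak duality written out by hand, and it needs neither strong duality nor the complementary slackness theorem, so it is more self-contained. (Your alternative route via zero duality gap is essentially the paper's argument in different clothing.) Both your argument and the paper's rest on the same single substantive input, namely primal feasibility, which follows from $\lambda\in\mathrm{int}(\Lambda)$; you are right to flag this as the one step that is not pure sign-chasing, and in fact the paper passes over it just as quickly as you propose to. One small point of care: when you say every feasible $\tilde f$ for the LP lies in $X$, that holds because $X$ is defined by the subset of constraints \eqref{eq: cap1_fluid}, \eqref{eq: cap2_fluid}, \eqref{eq: non-negativity_fluid}, so full feasibility implies membership in $X$; it is worth stating that containment explicitly since $D$ is defined as a maximum over $X$ and not over the full feasible set.
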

\begin{proof}
Since the objective of the primal problem is 0, a feasible solution to the primal is given by any feasible flow allocation $f_{ij}^k$. Since $q=0$ is a feasible dual solution, and any feasible $f_{ij}^k$ together with $q=0$ satisfy the complementary slackness condition (Theorem 4.5 in \cite{bertsimas}), the proof follows.
%it tfollows that Hence $f_{ij}^k$ is primal optimal and $q=0$ is dual optimal.
\end{proof}
This shows that the optimal solution corresponds to queue lengths equal to zero which makes, sense intuitively because any feasible flow allocation in the fluid domain doesn't require queuing. 

Let $G$ be a constant such that it bounds the Euclidean norm of the subgradients of the dual function $D(q)$ for all possible values of $q$, i.e. $G > \|g\|$. From equations (\ref{eq: subgradient1})-(\ref{eq: subgradient2}), we can see that the subgradients are bounded because the flow variables are bounded by link capacities and arrival rates are bounded by assumption. So G is finite. For simplicity we fix $\alpha(t) = 1$ and present the following convergence result.

% Also, let $\delta$ represent the error in the subgradient due to attenuation of the flows. For the underlay links $(i,j)$ the error is 
% \begin{align*}
% \delta_{ij} &= \left( c_{ij} - \sum_{l: (i,j) \in l}  \sum_k \epsilon^k_l(i,j) {f^k_{l}}^* \right) -  \\ & \qquad \qquad \left( c_{ij} - \sum_{l: (i,j) \in l}  \sum_k {f^k_{l}}^* \right) \\
% &= \hat{g}_{ij} - g_{ij}\\
% &>0
% \end{align*}
% Similar bounds can be computed for the overlay queues. 

\begin{theorem}\label{th:2}
Let us approximate the dual variables $q$ with the queue-lengths $\hat{q}$ that evolve according to equations (\ref{eq: overlay_approx})-(\ref{eq: underlay_approx}). Using the dual subgradient descent algorithm with $\alpha(t)=1$, the queue lengths converge to the set $S= \left\{\hat{q}: D(\hat{q}) \le \frac{1}{2}G^2 \right\}$. 
\end{theorem}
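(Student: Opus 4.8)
The plan is to run the textbook potential-function argument for projected subgradient descent with a constant step size, using the squared Euclidean distance to the optimal dual point as the potential. By Lemma~\ref{lem: zero_solution} the optimal dual point is $q^\star=0$, and because the primal optimal value is $0$ and the duality gap is zero, $D(q^\star)=D(0)=0$; consequently $D(\hat q)\ge 0$ for every feasible $\hat q\ge 0$. Since the updates~(\ref{eq: overlay_approx})--(\ref{eq: underlay_approx}) keep $\hat q(t)\ge 0$, I would track $\Phi(t):=\|\hat q(t)-q^\star\|^2=\|\hat q(t)\|^2$ and bound its one-step change.

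First I would use that $[\,\cdot\,]^+$ is the Euclidean projection onto the nonnegative orthant, a non-expansive map that contains and fixes $q^\star=0$; hence $\|\hat q(t+1)\|\le\|\hat q(t)-\hat g(t)\|$ and therefore
\[
\Phi(t+1)\ \le\ \Phi(t)\ -\ 2\,\hat g(t)^{\top}\hat q(t)\ +\ \|\hat g(t)\|^2 .
\]
The crux is to lower bound the cross term. Since $\hat q(t)\ge 0$ and $\hat g(t)\ge g(t)$ componentwise (the reduction factors $\epsilon^k_l\in[0,1]$ only make the drift larger), $\hat g(t)^{\top}\hat q(t)\ge g(t)^{\top}\hat q(t)$; and applying the subgradient inequality for the convex function $D$ at $\hat q(t)$ with test point $q^\star=0$, i.e. $D(0)\ge D(\hat q(t))+g(t)^{\top}(0-\hat q(t))$, gives $g(t)^{\top}\hat q(t)\ge D(\hat q(t))-D(0)=D(\hat q(t))$. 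For the last term, $\hat g(t)$ lies componentwise between $g(t)$ and the vector obtained by discarding the nonpositive tunnel-exit terms in~(\ref{eq: overlay_approx})--(\ref{eq: underlay_approx}); both extremes have entries bounded by link capacities and arrival rates, so (taking $G$ to bound $\|\hat g\|$ as well, by the same reasoning used before the theorem) $\|\hat g(t)\|\le G$. Combining the three estimates yields the key recursion
\[
\Phi(t+1)\ \le\ \Phi(t)\ -\ 2\,D(\hat q(t))\ +\ G^2 .
\]

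From here the conclusion is routine. If $\hat q(t)\notin S$, i.e. $D(\hat q(t))>\tfrac12 G^2$, then $\Phi(t+1)<\Phi(t)$, so the potential strictly decreases outside $S$, while a single step can increase it by at most $G^2$ (using $D\ge 0$), so the iterates never wander far from $S$. Summing the recursion over $t=0,\dots,T$ and using $\Phi(T+1)\ge 0$ gives $\sum_{t=0}^{T}2D(\hat q(t))\le\Phi(0)+(T+1)G^2$, hence $\min_{0\le t\le T}D(\hat q(t))\le\frac{\Phi(0)}{2(T+1)}+\tfrac12 G^2\to\tfrac12 G^2$, so $\liminf_{t\to\infty}D(\hat q(t))\le\tfrac12 G^2$; together with the bounded-excursion property this is the claimed convergence to $S$.

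I expect the main obstacle to be the bookkeeping around the approximate subgradient $\hat g$: verifying that the componentwise domination $\hat g\ge g$ together with $\hat q\ge 0$ is precisely what preserves the cross-term bound (this is where the $\epsilon^k_l$ factors enter), and that $\hat g$ stays norm-bounded so that the same constant $G$ works. Once these facts are in place, the remainder is the standard constant-step-size subgradient analysis, specialized to the fact that here $q^\star=0$ and $D(q^\star)=0$.
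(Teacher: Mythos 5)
Your proof is correct and follows essentially the same potential-function argument as the paper: $q^\star=0$ from Lemma~\ref{lem: zero_solution}, the squared distance $\|\hat q(t)\|^2$ as potential, the subgradient inequality evaluated at $x=0$, and the resulting recursion $\|\hat q(t+1)\|^2\le\|\hat q(t)\|^2-2D(\hat q(t))+G^2$. If anything, your handling of the passage from $\hat g$ to $g$ --- invoking non-expansiveness of $[\cdot]^+$, bounding the cross term via $\hat g^{\top}\hat q\ge g^{\top}\hat q$ (valid since $\hat q\ge 0$), and checking separately that $\|\hat g\|\le G$ --- is more careful than the paper's, which writes $\|\hat q-\hat g\|^2\le\|\hat q-g\|^2$ directly from $\hat g\ge g$, a step that really needs the projection and its componentwise monotonicity to justify.
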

\begin{proof}
We will show that $||\hat{q}(t+1) - q^*||^2 < ||\hat{q}(t) - q^*||^2$ when $q(t)$ is outside the set S. Because $q^*=0$ from Lemma \ref{lem: zero_solution}, it suffices to show that $||\hat{q}(t+1)||^2 < ||\hat{q}(t)||^2$. 

We have,
\begin{align*}
||\hat{q}(t+1)||^2 &=  \left\| \hat{q}(t) - \hat{g} \right\|^2
  %  &= \left\| \hat{q}(t) - \left( g + \delta \right) \right\|^2 
\end{align*}
Since $\hat{g} \ge g$,
\begin{align*}
||\hat{q}(t+1)||^2 &\le \left\| \hat{q}(t) - g \right\|^2 \\
    &= \|\hat{q}(t)\|^2 - 2 \hat{q}(t)^T g + \left\| g\right\|^2
\end{align*}

Our algorithm chooses $g$ to be a subgradient of $D(.)$ at $\hat{q}(t)$. So,
$$ D(x) \ge  D(\hat{q}(t)) + \left(x - \hat{q}(t) \right)^Tg, \forall x \in \mathbb{R}^{m},$$
wehre $m$ is the dimension of $\hat{q}$. Taking $x = 0$, $$D(\hat{q}(t)) \le \hat{q}(t)^T g (f(t)^*)$$
So, $||\hat{q}(t+1)||^2 \le ||\hat{q}(t)||^2 - 2 D(\hat{q}(t)) + G^2.$
Hence when, the $\hat{q}$ is far away from the optimal, specifically when $D(\hat{q}(t)) >  \frac{1}{2}G^2$, it moves towards the optimum in the next time-step. 
\end{proof}

Hence, we will use queue-lengths instead of the dual variables in the implementation of OORP. This will allow us to use the policy presented in Algorithm \ref{alg: OORP} without having to perform the dual variables update.

%The dual variable update given in equation (\ref{eq: overlay_q_fluid}) is equivalent to the queue update equations when $\alpha = 1$. However, the equation (\ref{eq: underlay_q_fluid}) is not. 

\subsection{Underlay sources and destinations} \label{sec: underlay_sources}
The problem formulation given in beginning of Section V assumes that all the flows go from one overlay node to another. However, this assumption can be easily removed. Any flow that originates in the underlay be routed over a single path using the underlay routing scheme. Hence these flows can be represented simply as a reduction in the link capacities in the constraints (5) and (6) for the links traversed by this flow. Our algorithm stays unchanged because it is agnostic to the change in the link capacities at the underlay. OORP is also optimal when the underlay is a destination because destination nodes do not perform any routing.

\subsection{Rate control} \label{sec: rate_control}
It is well known that  subgradient descent is a general method to solve convex optimization problems. The dual gradient descent algorithm has been used derive distributed solutions to network utility maximization problems (e.g. \cite{low,lin_shroff}). In the overlay network setting we can get a distributed solution to the utility maximization problem of the form:
$$\max \sum_{k\in K, i\in \mathcal{O}} U^k_i(\lambda_i^k)$$ where $U^k_i(.)$ is concave and strictly increasing.  In this setting, we can assume that there is an infinite backlog at the sources, and the rates $\lambda_i^k$ are chosen to maximize the total network utility. 

We can use the same derivation technique as in section \ref{sec: solution} to obtain a distributed algorithm. The algorithm is very similar to OORP with an added rate controller at each source. The rate control algorithm so obtained is standard, and the joint rate control and routing algorithm can be written as follows: 

\begin{algorithm}[ht]
\caption{Rate control algorithm for the utility maximization problem}
\label{alg: OORP}
At each time-step $t$:
\begin{enumerate}
\item Source node $i\in \mathcal{O}$ for commodity $k$ chooses the rate ${\lambda_i^k}^*$ as follows: 
$${\lambda_i^k}^* = \argmax_{0\le \lambda_i^k \le \mathcal{M}^i_k} {U^k_i(\lambda_i^k)}-q^k_i \lambda^k_i.$$
Here, $\mathcal{M}^i_k$ is a finite upper bound on the rate that the source $i$ can receive.
\item All the overlay nodes use OORP for routing.
\end{enumerate}
\end{algorithm}

\section{Unknown Underlay Queues}
In the previous section we showed that the dual subgradient descent algorithm can be used to compute a feasible rate for each commodity on each link. We also showed that the queue lengths can be used to approximate the subgradient. However, typically legacy devices may not be able to send queue-lengths to the sources. In this section, we will present two approaches to estimate the required queue-length information. The first approach will estimate it using the delay experienced by the packets. The second approach will involve sending probe packets at a certain time intervals that collect the queue-length information in the tunnels.

\subsection{Delay based approaches} \label{sec: delay_approaches}
From equation (\ref{eq: subgradient}) it can be seen that in order to compute the subgradients we only need the total backlog in the tunnel, i.e. we don't need the length of individual queues. A natural approach to estimate the total backlog in a tunnel is by using the time it takes for a packet to traverse it. To implement this method, each tunnel $l$ maintains a delay variable $D_l$. When a packet is sent into a tunnel, the sending node stamps the packet with the current time. When the packet exits the tunnel, the difference between the current time and the time-stamp on the packet is used to update $D_l$. When computing the optimal flow variables, in equation (\ref{eq: subgradient}) of OORP we substitute the sum of the underlay queues-lengths, $\sum_{(a,b)\in l:a \in U}q_{ab}$, with $D_l$.

For this approach, we assume that the underlay uses a FIFO queuing model which is a common forwarding scheme. Hence, this method requires no modification to the underlay. A similar approach has been used by TCP Vegas to solve a network utility maximization problem \cite{low_vegas}.

Although this approach is simple and does not require cooperation from the underlay, the queue-length estimates obtained by this method can be arbitrarily bad. Consider a FIFO queue that is empty at time zero. As shown in Figure \ref{fig: queue_estimation_error}(a), it has an incoming rate of 2 and outgoing capacity of 1. We want to estimate the queue-length at time $t$ by using packet delays. To see the problem with this approach, let us consider a situation when 2 packets arrive at the queue at every time-slot for the first $\tau$ time-slots, and no arrivals happen after that. In this situation, the actual queue length grows at the rate of 1 for the first $\tau$ time-slots, and then it decreases at the rate of 1 packet per time-slot until the queue is empty. On the other hand, the delay increases at the rate of $\frac{1}{2}$, and the last packet (that arrives at the $\tau$th time-slot) sees a delay of 100 because there are 99 packets in the queue at that time. So at time $2\tau$ when the queue is emptied, the packet received will have suffered a delay of $\tau$ time-slots giving a queue-length estimate of $\tau$, whereas the actual queue-length at that time is zero. Furthermore, the estimate stays bad until another arrival happens. This problem is illustrated in Figure \ref{fig: queue_estimation_error}(b). These arbitrarily bad estimates lead to sub-optimality of OORP which we will observe in the simulations. 

A simple way to improve the estimate is to send empty probe packets when real packets are not available for some time period $\mathcal{T}$. A similar approach has been shown to achieve throughput optimality in a special scenario in \cite{georgios_blinkers}. This approach quickly identifies when a queue becomes empty in the absence of new data packets, and the control algorithm can react accordingly. Although this approach corrects the estimate within P time-slots, it can still suffers from the arbitrarily bad estimation errors. As shown in Figure \ref{fig: queue_estimation_error}(c), at time $2\tau$ the estimate is $\tau$ whereas the actual queue-length is zero. Thus, we propose the following approach using explicit probes.

\begin{figure}[t]
\centering
\includegraphics[scale=.7]{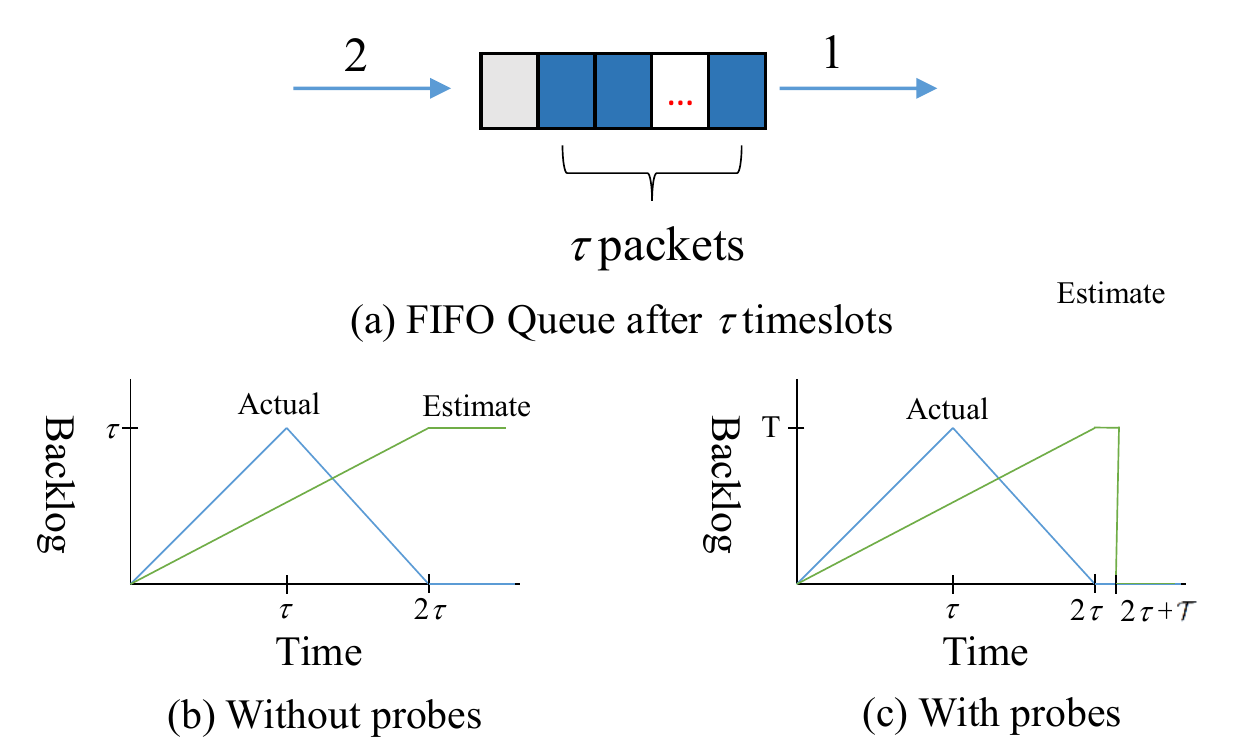}
\caption{The actual queue-length of a single FIFO queue and its estimate calculated using delay. The arrivals happen at the rate of 2 packets per time-slot for the first $\tau$ time-slots, and there are no arrivals after that. Service rate is fixed at 1 packet per time-slot.}
\label{fig: queue_estimation_error}
\end{figure}

\subsection{Priority probe approach}
In this approach, we assume that the underlay nodes are capable of stamping the current queue-lengths into a special type of packets called the probe packets. We also assume that these packets are given higher priority compared to the data packets, and they do not consume link capacity because they are very small in size. These packets are generated by each tunnel at a fixed time-intervals $\mathcal{T}$. When a probe packet exits a tunnel, the sum of the queue-lengths it has collected can be used to compute the optimal flow variables in Algorithm \ref{alg: OORP}.

We can see that this approach results in a much more accurate estimation of the backlog compared to the delay based approach. However, the value of $\mathcal{T}$ can have a significant impact on the performance on the algorithm. We will study its impact in the next section via simulations.

\section{Simulation Results} \label{sec: simulations}
We present several simulation results to evaluate the performance of the optimal overlay routing policy (OORP) given in Algorithm \ref{alg: OORP}. First, we will ascertain that this algorithm is in fact optimal for the network in which the OBP policy of \cite{nathan} was suboptimal. Then we evaluate the effect that different methods of estimating the queue-lengths have on the performance of the algorithm. Next we will study the performance of our algorithm when there is uncontrolled background traffic in the underlay. Finally, we will simulate the rate control algorithm to show that it achieves the maximum throughput.

\subsection{OORP on the counterexample to OBP}
We reconsider the network from Section III, shown in Figure \ref{fig: counter_topology}. The network has three commodities and it can support a maximum arrival rate vector of $\lambda_{\max}^{simA} = [1,1,1]$. We simulate the network under three different policies: the backpressure policy (BP), the overlay backpressure policy (OBP) and our policy, OORP. The simulations are conducted at different loads $\rho = 0.5, 0.55, ..., 1$. For each policy the arrivals are Poisson distributed with rates $\lambda = \rho \lambda_{\max}$. The result of the simulations is given in Figure \ref{fig: counter_simulation}.

\begin{figure}[ht]
\centering
\includegraphics[scale=.7]{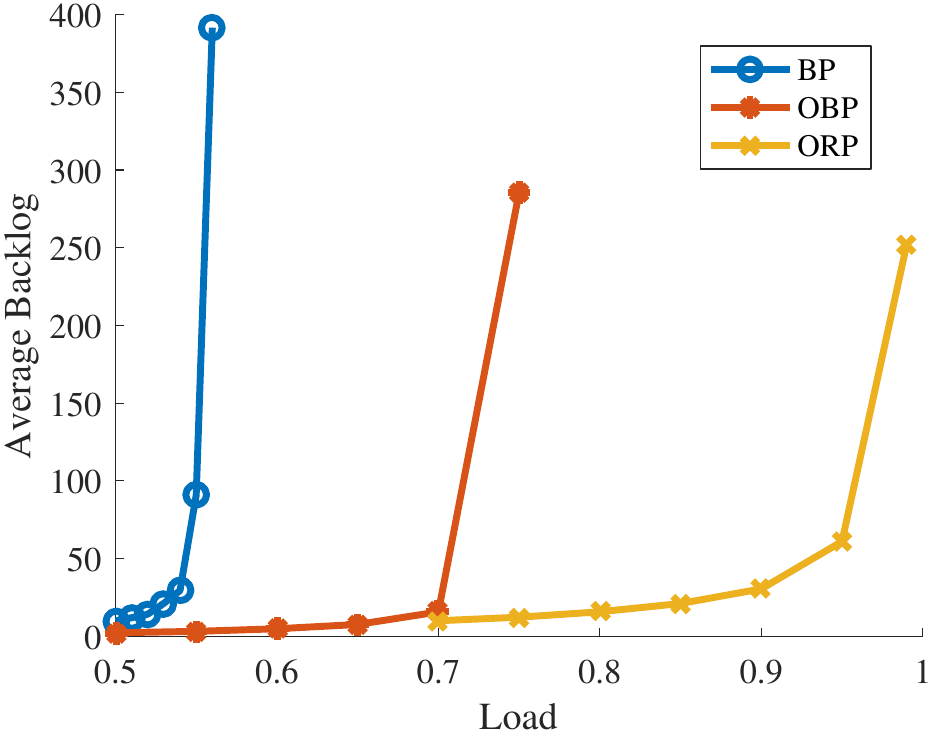}
\caption{Performance of different routing algorithms on the overlay network shown in Figure \ref{fig: counter_topology}.}
\label{fig: counter_simulation}
\end{figure}

The BP algorithm executed on the overlay network does not account for the underlay nodes and simply views a tunnel $l$ as a link between two overlay nodes with capacity $c_{l_1 l_2}$. In the plot we can see that this algorithm becomes unstable around the load of $0.56$. This is as expected because for each commodity the backpressure policy uses both tunnels equally since they have equal weights. The end node of both the tunnels is a destination, which has zero backlog, and BP does not account for the backlog in the underlay. So the weigh for each tunnel of commodity $k$ is equal to $Q_{s_k}^k$. When the algorithm uses the longer tunnel, the network becomes unstable for relatively low load.

The plot also shows that OBP is suboptimal and OORP achieves maximum throughput. We discussed the suboptimality of OBP in Section III. The main reason was that the OBP policy could not avoid using the longer tunnel which gave raise to a cycle of increased congestion. But in OORP, when the underlay queues-lengths are positive, the shorter tunnels have a higher weight than the longer tunnels. For example, for commodity 1, the weight of the shorter tunnel $(s_1,1,2,d_1)$ is $Q^1_{s_1} - Q_{12}-Q_{2d_1}$ and the weight of the longer tunnel $(s_1,3,1,2,d_1)$ is $Q^1_{s_1}-Q_{31} - Q_{12}-Q_{2d_1}$. So when the underlay queues are large, the longer tunnel needs a lot more backlog at the source than the shorter tunnel for its weight to be positive. This causes OORP to avoid using the longer tunnels when the network is congested.

\subsection{Estimated Tunnel Backlog}
We consider the network given in Figure \ref{fig: final_simulation} to observe the effect of estimating the backlog in the tunnels.  In this network, all the links are bidirectional, composed of two unidirectional links. The links between an overlay and an underlay node have capacity 2 in each direction. All other links have unit capacity in both directions. We will simulate the network with two commodities. The first commodity is defined by the source-destination pair (1,3) and the second is defined by (2,4). For these commodities the network supports a max-flow vector of $\lambda_{\max}^{simB} = [2,2]$. The simulation is performed at various load levels and the arrivals are Poisson distributed. 

The underlay uses the shortest path routing hence creating a large number of available tunnels. Node 1 can send packets to node 3 directly via node 7 or 10 using the tunnels (1,7,5,6,3) and (1,10,7,5,6,3) respectively. However, these tunnels overlap, hence there is no benefit in using both of them. To achieve the throughput of two, node 1 must send its traffic through node 2 and have it forward it to node 3. Similarly node 2 must send some of its traffic through node 1 in order to achieve high throughput. Observing the organization of the tunnels in the network, we can see that using the wrong tunnel might cause the network to lose throughput. In addition, the tunnels form cycles in the overlay topology. These features make this topology challenging for a routing algorithm to achieve the optimal throughput.

\begin{figure}[ht]
\centering
\begin{tabular}{c}
\subfloat[Physical network topology with overlay (blue) and underlay (white) nodes. The underlay network uses shortest path routing creating a total of eighteen tunnels between the overlay nodes. The dotted lines show the tunnels from node 1 to nodes 2 and 3.]{
    \makebox[7cm][c]{
        \includegraphics[scale=.5]{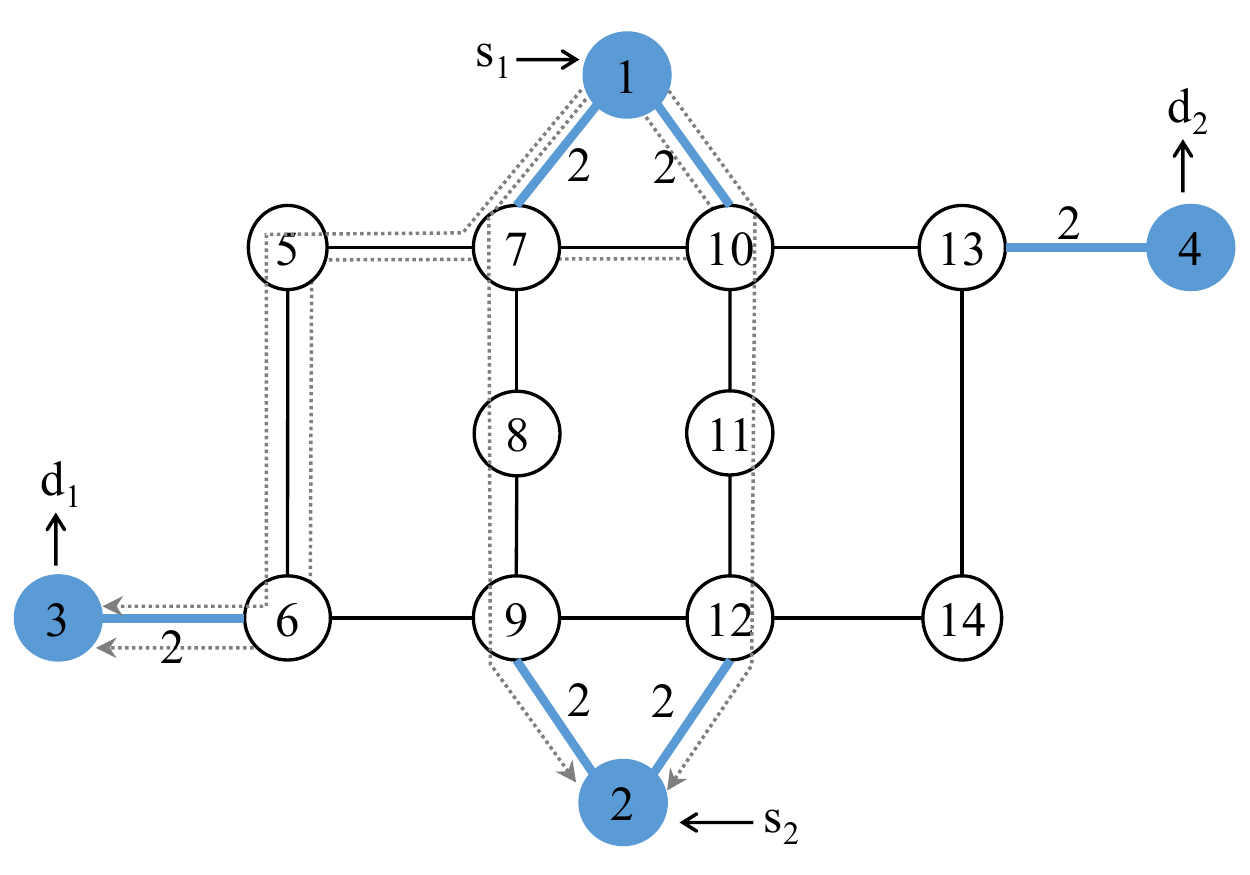}
    }
}\\

\subfloat[Corresponding overlay network showing all eighteen tunnels. Each link in this graph represents a tunnel; an undirected link represents two tunnels, one in each direction. E.g. the two links going from node 1 to 3 represent the two tunnels from node 1 to 3:  (1,7,5,6,3) and (1,10,7,5,6,3).]{
\makebox[7cm][c]{
    \begin{tikzpicture}[-, >=stealth', auto, semithick, node distance=3cm]
    
    \tikzstyle{every state}=[fill=mycolor, draw=mycolor, thick,text=black,scale=.5]
    
    \node[state, text=black] (3) {\large 3};
    \node [state, text=black] (1)[above of=3,right of=3] {\large 1};
    \node[state, text=black] (4)[right of=3, right of=3] {\large 4};
    \node[state, text=black] (2)[below of=3, right of=3] {\large 2};

    \path (3) edge[bend left](1);
    \path [<-] (3)  edge[right](1);
    \path [<-] (3) edge[left] (2);
    \path (3) edge[bend right](2);
    \path [<-] (4) edge[left] (1);
    \path (4) edge[bend right](1);
    \path (4) edge[bend left] (2);
    \path [<-] (4) edge[right](2);
    \path (1) edge[bend left] (2);
    \path (1) edge[bend right](2);
    \path (3) edge (4);
    \end{tikzpicture}
}
}
\end{tabular}
\caption{Physical and overlay network topology for simulations in Sections \ref{sec: simulations} B and \ref{sec: simulations} C. }
\label{fig: final_simulation}
\end{figure}

The result of the simulations under different load levels is given in Figure \ref{fig: estimation_simulation}. We can see that the delay approach, which uses packet delay as an estimate of the tunnel backlog, does not provide optimal throughput. Although probing the delay in the network with control packets improves the performance, it is still suboptimal. This happens because when the backlog is large the estimation error of this approach can also be large as described in Section \ref{sec: delay_approaches}.

We can also see that the probing approach achieves optimal throughput, and its performance is close to that of using the actual queue-lengths. The estimates obtained by this approach are much more accurate than those from the delay approach because they are not affected by the amount of backlog in the network. When $\mathcal{T}$ is increased, the stale estimate is used for a longer time period, so the performance of the algorithm degrades.

\begin{figure}[ht]
\centering
\includegraphics[scale=.7]{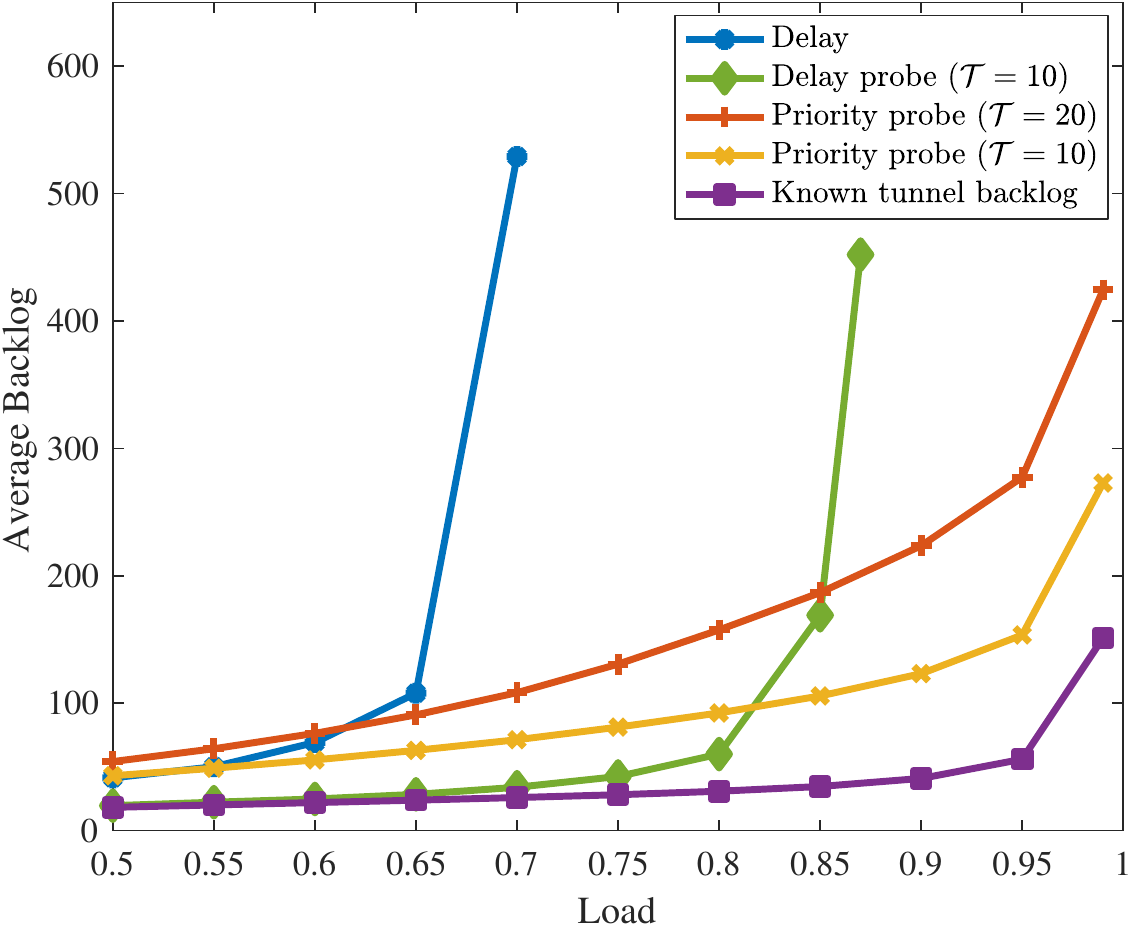}
\caption{ Performance of OORP under different measures of tunnel backlog.}
\label{fig: estimation_simulation}
\end{figure}

\subsection{Background traffic} \label{sec: background_sim}
So far we have assumed that all the traffic in the network belongs to the overlay network. However, in real networks the underlay can be routing other traffic not generated by the overlay nodes. Next, we will study the performance of our algorithm under such traffic. We expect the OORP to be throughput optimal under stable background traffic in the underlay because such traffic can be thought of as a reduction in the link capacities in inequalities (\ref{eq: tunnel_constraint_fluid}) as described in Section \ref{sec: underlay_sources}.

We again consider the network from Figure \ref{fig: final_simulation} with two commodities (1,3) and (2,4). We inject two flows of background traffic: first going from node 7 to 6 along the path (7,5,6) with the arrival rate of 0.5, and second going from 8 to 14 along the path (8, 9, 12, 14) with the arrival rate of 0.2. The arrivals happen according to the Poisson process. Note that the first background flow blocks commodity 1's tunnel (1,7,4,5) and the second flow blocks commodity 2's tunnel (2,12,14,13,4); both the tunnels are essential for achieving the max flow vector $\lambda^{simB}_{\max}$. This reduces the maximum supportable arrival rates for the two commodities to $\lambda^{simC}_{\max}=[1.5, 1.8]$. Figure \ref{fig: background_traffic_simulation} shows the result of the simulation. We can see that all the approaches except for the delay based approaches achieve the maximum throughput.

\begin{figure}[ht]
\centering
\includegraphics[scale=.7]{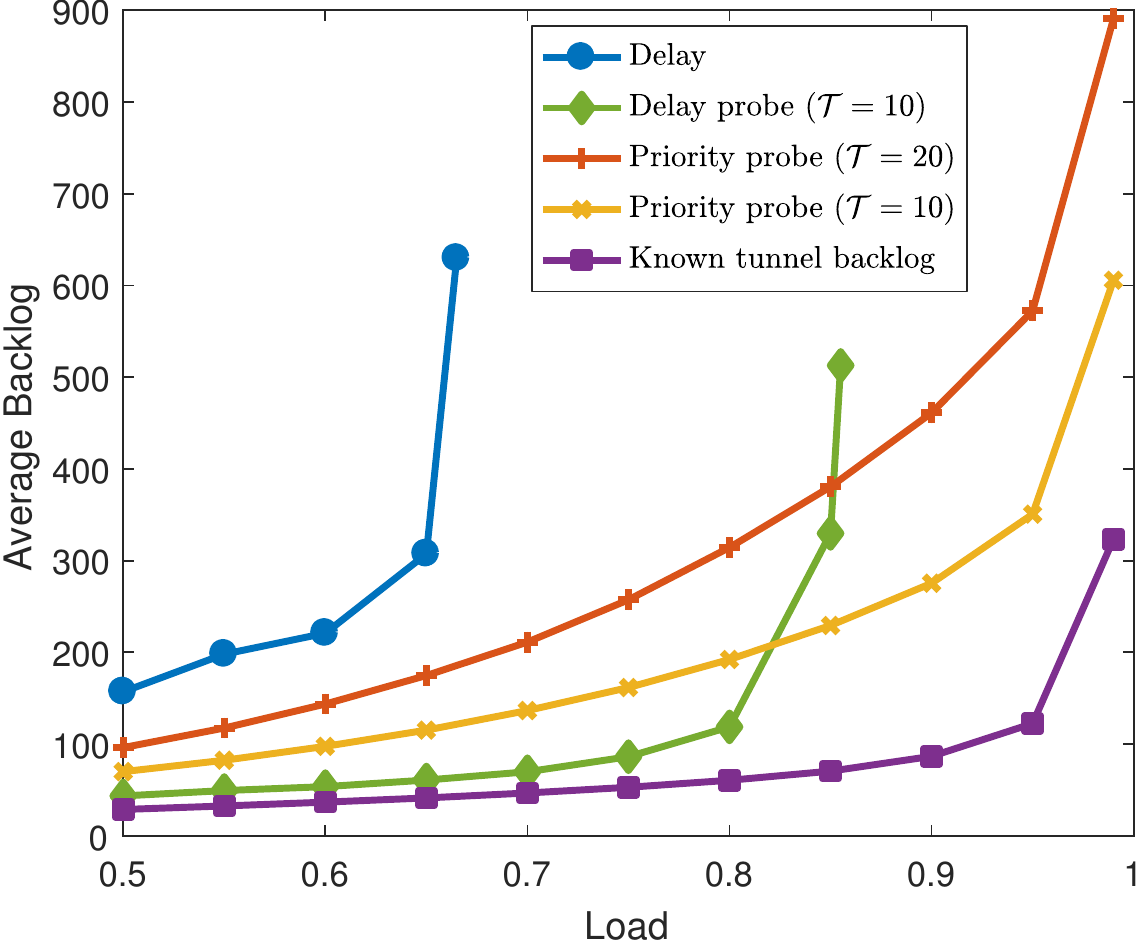}
\caption{Performance of OORP in a network with background traffic.}
\label{fig: background_traffic_simulation}
\end{figure}

 %The background traffic in this simulation are localized to specific section of the tunnels. So this traffic produces less congestion in the network compared to the situation when all of the traffic $\lambda^{simB}_{\max}$ was going between the overlay nodes. Hence the average backlog for this simulation is lower than that in Figure \ref{fig: estimation_simulation}.

\subsection{Rate control}
To observe our rate controller at work, we consider the network from Section \ref{sec: background_sim} with a minor modification as shown in Figure \ref{fig: rate_control_topology}. We add a new overlay node 15 and a new commodity (15, 14) to the network. Node 15 connects to node 5 with a directed link (15, 4) which has a unit capacity. We constrain the third commodity to use the tunnel provided by the shortest path (5, 6, 9, 12, 14). For all three sources, the utility function is chosen to be $20\log(\lambda)$  and $M_i^k=20$. Note that the addition of the third commodity makes the simulation more challenging because the rate that maximizes total throughput is not the same as the rates that maximizes utility. We assume that the backlog information of each tunnel is available to the overlay nodes instantaneously. 

\begin{figure}[ht]
\centering
\includegraphics[scale=.5]{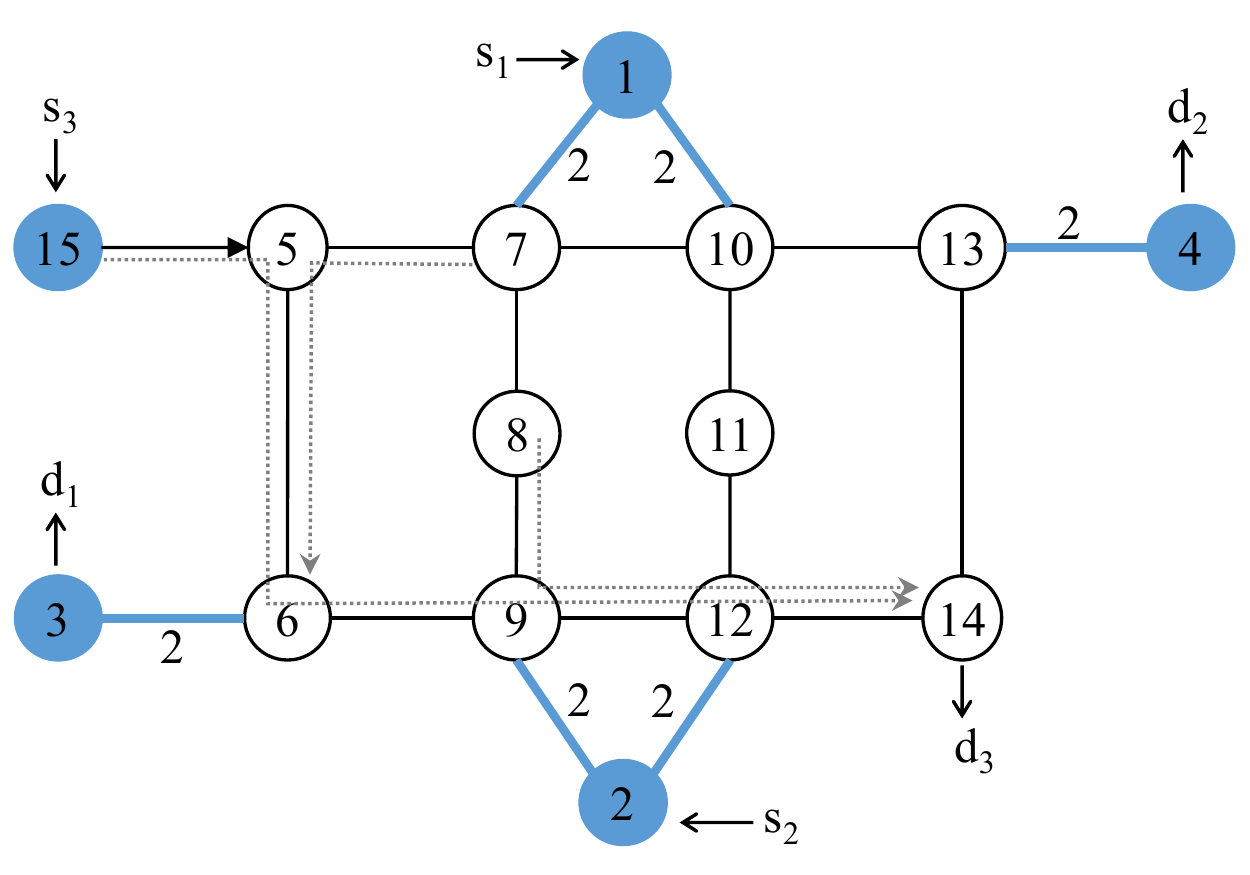}
\caption{Topology for the rate control experiment. The dotted lines show the tunnel assigned to the third commodity and the background traffic.}
\label{fig: rate_control_topology}
\end{figure}

Constrained by the link capacities and the background traffic, the maximum throughputs for the commodities 1,2, and 3 are 1.5, 1.8, and 1 respectively. However, the third commodity interferes with both commodities 1 and 2, hence the throughput of [1.5, 1.8, 1] is not achievable. From the plot in Figure \ref{fig: throughput_plot} we can see that the throughput vector converges to [1, 1.3, 0.5] which maximizes the total utility. We can see that this throughput vector has a smaller sum than the sum of the maximum throughput supported in Section \ref{sec: background_sim}. That is, the network could have supported higher throughput by giving zero throughput to the third commodity, but that would have decreased the utility of the network.

\begin{figure}[ht]
\centering
\includegraphics[scale=.75]{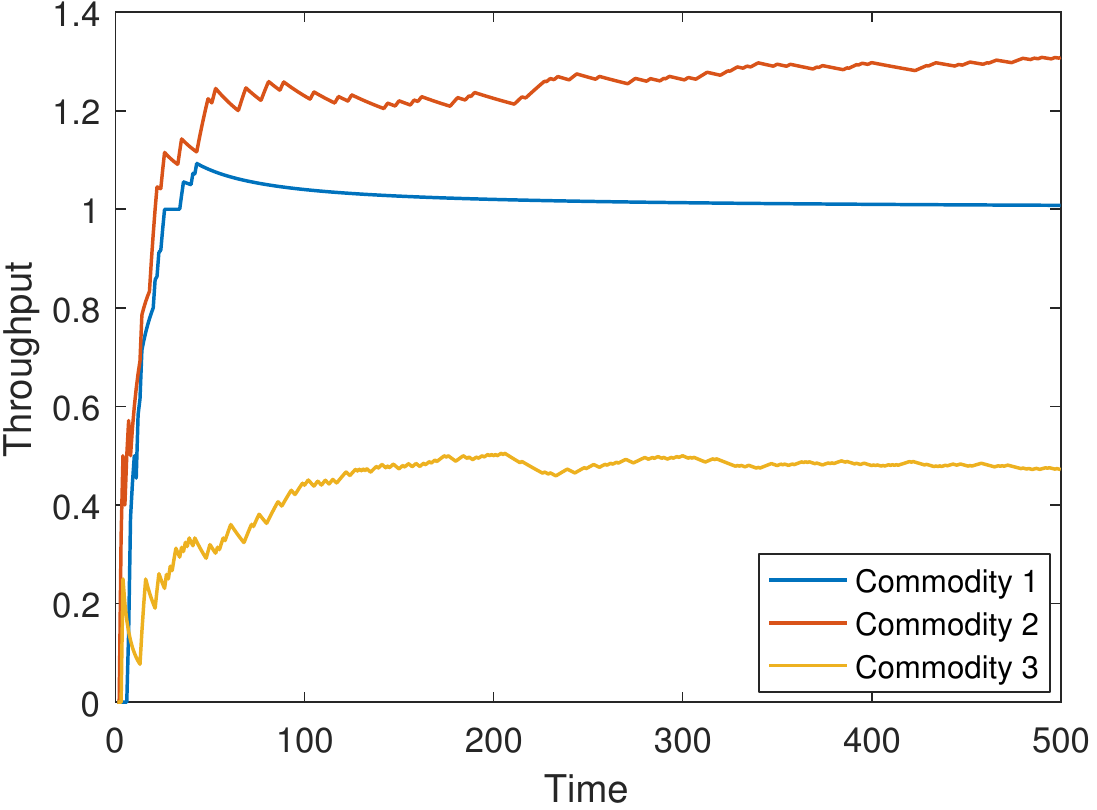}
\caption{Throughput achieved by the rate control algorithm with OORP converges to the rate that maximizes utility.}
\label{fig: throughput_plot}
\end{figure}

\section{Conclusion}
We showed that the existing algorithms for routing traffic in an overlay network are suboptimal, and developed a throughput optimal policy called the Optimal Overlay Routing Policy (OORP). This policy is distributed and can also be used with a rate controller to maximize network utility. Our algorithm requires the knowledge of congestion at the underlay, which might not be available to the overlay nodes. Hence we proposed different approaches to estimating underlay congestion. Simulations results show that OORP outperforms OBP and that estimating congestion using probing mechanism is effective. Future research will include obtaining better estimates for the congestion in the tunnels with minimum support from the underlay nodes and reducing the delay experienced by the packets in the network.

 \appendix[Proof of Theorem \ref{thm: centralized}] \label{app: centralized_proof}
 \subsection{Stationary policy $\pi$ }
 In order to prove the stability of the centralized policy, we need a stationary policy that stabilizes the network. 
 For any arrival rates ${\bf \lambda}$ such that ${\bf \lambda} + {\bf \epsilon} \in \Lambda$, $\epsilon>0$, we know that there exist a feasible flow allocation vectors $(f^k_{l})_{l\in L, k \in K}$ such that for any overlay node $n$,
 \begin{equation} \label{eq: conservation}
 \sum_{l\in L: l_1 = n} f^k_{l} - \sum_{l\in L: l_{|l|} = n} f^k_{l} = \lambda^k_n + \epsilon.
 \end{equation}
 This vector can be obtained by solving the multi-commodity flow problem. We assume that these flow variables can be closely approximated by rational numbers. So there exists integers  $p^k_{l}$ and $q$ such that $f^k_{l} = p^k_{l}/q$. The time-slot are divided into $T$ slot long frames. 

 The policy $\pi$ simply sends $p^k_{l}$ amount of commodity $k$ packets every $q$ time-slots in each frame. Because the underlay is using a universally stable forwarding scheme and the burstiness constraints are satisfied, the underlay queues are deterministically bounded by a constant $B$ \cite{andrews}. Also note that all the capacity constraints are satisfied every $q$ time-slots. Hence, $\pi$ stabilizes $\lambda$.

 Let $F^k_{l}(t+\tau,\pi)$ represent the number of packets sent into tunnel $l$ by node $l_1$ at time $t+\tau$ under policy $\pi$. Let $\bar{F}^k_{l}(t+\tau,\pi)$ represent the number of packets that are received by node $l_{|l|}$ at time $t$ from tunnel $l$ under policy $\pi$. Note that $F^k_{l}(t+\tau,\pi) = \bar{F}^k_{l}(t+\tau,\pi)$ only if the tunnel $l$ is a direct link between two overlay nodes. If a tunnel passes through the underlay, it can take a bounded amount of time for the packets to exit the tunnel. Now, we prove the following lemma that will be used in proving the theorem.

 \begin{lemma}\label{lem: rand_drift}
 For the proposed randomized policy $\pi$
 \begin{align}
  &\mathbb{E}\left[   \sum_{l\in L: l_1 = n}  \sum_{\tau=0}^{T-1}F^k_{l}(t+\tau,\pi) -  \right. \nonumber \\
  & \left.\left. \sum_{{l\in L: l_{|l|} = n} }\sum_{\tau=0}^{T-1} \bar{F}^k_{l}(t+\tau,\pi) \right|{\bf Q}(t)\right]  \ge T(\lambda_n^k + \epsilon) - B, \forall n. \nonumber
 \end{align}
 \end{lemma}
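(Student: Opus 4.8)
The plan is to establish the inequality pathwise — it is essentially a packet-conservation bookkeeping argument — and then take conditional expectations, which is immediate because the right-hand side is a deterministic constant.

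First I would count the packets injected by $\pi$. Node $l_1$ injects $p^k_l$ commodity-$k$ packets into tunnel $l$ every $q$ slots, so over a frame $[t,t+T)$ with $T$ an integer multiple of $q$ (the general case only changes the bound by a fixed additive constant, which can be absorbed into $B$) it injects exactly $\sum_{\tau=0}^{T-1} F^k_l(t+\tau,\pi) = (T/q)\,p^k_l = T f^k_l$. Next I would bound the packets received at node $n$. Writing $N^k_l(t)$ for the number of commodity-$k$ packets sitting in the underlay queues of tunnel $l$ at time $t$ (under $\pi$ run from time $0$), conservation of packets inside tunnel $l$ over the frame gives
\begin{align*}
N^k_l(t) + \sum_{\tau=0}^{T-1} F^k_l(t+\tau,\pi) &= N^k_l(t+T) + \sum_{\tau=0}^{T-1} \bar{F}^k_l(t+\tau,\pi) \\
&\ge \sum_{\tau=0}^{T-1} \bar{F}^k_l(t+\tau,\pi),
\end{align*}
so that $\sum_{\tau=0}^{T-1}\bar{F}^k_l(t+\tau,\pi) \le T f^k_l + N^k_l(t)$. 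Since $\pi$ respects the tunnel capacity constraints every $q$ slots and the underlay forwarding policy is universally stable, every underlay queue is deterministically bounded; summing over the finitely many underlay queues traversed by tunnels that end at $n$ yields $\sum_{l:\,l_{|l|}=n} N^k_l(t) \le B$ for a finite constant $B$ independent of $t$ (relabeling the constant if necessary).

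Combining these estimates and invoking the flow-conservation identity \eqref{eq: conservation},
\begin{align*}
\sum_{l:\,l_1=n}\sum_{\tau=0}^{T-1} F^k_l(t+\tau,\pi) - \sum_{l:\,l_{|l|}=n}\sum_{\tau=0}^{T-1} \bar{F}^k_l(t+\tau,\pi)
&\ge T\sum_{l:\,l_1=n} f^k_l - T\sum_{l:\,l_{|l|}=n} f^k_l - \sum_{l:\,l_{|l|}=n} N^k_l(t) \\
&= T(\lambda_n^k+\epsilon) - \sum_{l:\,l_{|l|}=n} N^k_l(t) \\
&\ge T(\lambda_n^k+\epsilon) - B.
\end{align*}
Taking $\mathbb{E}[\,\cdot \mid {\bf Q}(t)]$ on both sides preserves the inequality since the right-hand side is a constant, which is the claim.

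The argument is routine; the step needing genuine care is the uniform bound on the number of in-transit packets $\sum_{l:\,l_{|l|}=n} N^k_l(t)$ at the start of an arbitrary frame, which is exactly where universal stability of the underlay (together with $\pi$ meeting the tunnel capacity constraints) enters — without it the received term cannot be controlled. A secondary bookkeeping point is that the constant $B$ in the statement is, strictly speaking, a multiple of the per-queue bound of \cite{andrews} (depending on tunnel lengths and the number of tunnels incident to $n$), but since all of these quantities are finite this is harmless and we simply relabel. One should also note that $\pi$ is taken to inject the prescribed amounts $p^k_l$ every $q$ slots regardless of instantaneous queue content, which is what makes the injection count exact and is justified because $\lambda+\epsilon \in \Lambda$ guarantees feasibility of the underlying flow allocation.
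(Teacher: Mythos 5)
Your proof is correct and follows essentially the same route as the paper's: bound the packets exiting the tunnels over the frame by the packets injected plus the initial in-tunnel backlog (controlled by $B$ via universal stability), then apply the flow-conservation identity \eqref{eq: conservation} to get $T(\lambda_n^k+\epsilon)-B$. Your version is, if anything, slightly cleaner in making the tunnel-conservation identity and the pathwise-then-condition step explicit, whereas the paper works directly in expectation, but the decomposition and the role of $B$ are identical.
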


 \begin{proof}
 \begin{align}
  &\mathbb{E}\left[  \sum_{l\in L: l_1 = n}  \sum_{\tau=0}^{T-1}F^k_{l}(t+\tau,\pi) - \right. \nonumber \\
 &\qquad \qquad \left. \left. \sum_{{l\in L: l_{|l|} = n} }\sum_{\tau=0}^{T-1} \bar{F}^k_{l}(t+\tau,\pi)  \right|{\bf Q}(t)\right]  \nonumber \\
 &=\mathbb{E}\left[ \sum_{l\in L: l_1 = n}  \sum_{\tau=0}^{T-1}F^k_{l}(t+\tau,\pi) -  \right. \nonumber\\
 & \qquad \qquad \left. \sum_{{l\in L: l_{|l|} = n} }\sum_{\tau=0}^{T-1} \bar{F}^k_{l}(t+\tau,\pi) \right] \\
 &\ge\mathbb{E}\left[ \sum_{l\in L: l_1 = n}  \sum_{\tau=0}^{T-1}F^k_{l}(t+\tau,\pi) - \right. \nonumber\\
 & \qquad \qquad \left. \sum_{{l\in L: l_{|l|} = n} }\sum_{\tau=0}^{T-1} F^k_{l}(t+\tau,\pi) - B  \right] \label{eq: bound}\\
 &= T\sum_{l\in L: l_1 = n} \frac{p^k_{l}}{q} - T\sum_{l\in L: l_{|l|} = n} \frac{p^k_{l}}{q} - B\\
 &= T\sum_{l\in L: l_1 = n} f^k_{l} - T\sum_{l\in L: l_{|l|} = n} f^k_{l} - B\\
 &= T(\lambda_n^k + \epsilon) - B \nonumber
 \end{align}
 Here $B$ is a finite constant representing the maximum amount of backlog in the underlay network. We use this constant to obtain inequality (\ref{eq: bound}) and equation (\ref{eq: conservation}) to obtain the last equality.
 \end{proof}

 \subsection{Analysis of TBP}
 We know that the underlay queues are stable because the traffic injected into the tunnels satisfy the burstiness constraints and the underlay employs a universally stable forwarding policy \cite{andrews}. Next we prove the stability of overlay queues.

 The queue evolution of an overlay node $n$ can be written as:
 \begin{align*}
 Q^k_n(t+1) &= \left[ Q^k_n(t) - \sum_{l: l_1 = n} F^k_{l}(t)  +   \sum_{l:l_{|l|}=n} \bar{F}^k_{l}(t) + A^k_n (t)\right]^+ \\
 &\le \left[ Q^k_n(t) - \sum_{l: l_1 = n} F^k_{l}(t) \right]^+ +   \sum_{l:l_{|l|}=n} \bar{F}^k_{l}(t) + A^k_n (t) 
 \end{align*}
 Here $F^k_{l}(t)$ represents the amount of packets injected into the tunnel $l$ at time $t$, and $\bar{F}^k_{l}(t)$ represents the number of packets that exit tunnel $l$ at time $t$.

 Then the queue length after $T$ slots can be bounded as follows
 \begin{align}
 Q^k_n(t+T) \le &\left[ Q^k_n(t) - \sum_{l: l_1 = n} \sum_{\tau=0}^{T-1}F^k_{l}(t+\tau)  \right]^+ +\nonumber\\
 		& \sum_{l:l_{|l|}=n} \sum_{\tau=0}^{T-1} \bar{F}^k_{l}(t+\tau) + \sum_{\tau=0}^{T-1}A^k_n (t+\tau) \nonumber\\
 		\le &\left[ Q^k_n(t) - \sum_{l: l_1 = n} \sum_{\tau=0}^{T-1}F^k_{l}(t+\tau)  \right]^+ +\nonumber\\
 		& \sum_{l:l_{|l|}=n} \sum_{\tau=0}^{T-1} F^k_{l}(t+\tau) + \sum_{\tau=0}^{T-1}A^k_n (t+\tau) + B \nonumber
 \end{align}
 The first inequality comes from considering all the arrivals and departures in a T-slot interval in a single slot. We get the second inequality by bounding the backlog in the underlay nodes by a constant $B$.

 Now to prove the theorem, consider the quadratic Lyapunov function $$L({\bf Q}(t)) = \sum_{k,n} \left( Q^k_n(t) \right)^2.$$ The {\em T-slot drift} is given by:
 \begin{align}
 \Delta_T = &\mathbb{E}\left[L({\bf Q}(t+T)) - L({\bf Q}(t)) | {\bf Q}(t) \right] \nonumber\\
 	\le&T^2K + \sum_{k,n} Q^k_n(t) \left(T \lambda^k_n + B\right) + \sum_{k,n} Q^k_n   \nonumber\\
 	&\left. \mathbb{E} \left[  \sum_{\tau=0}^{T-1} \sum_{l:l_1=n} F^k_{l}(t+\tau) - \sum_{\tau=0}^{T-1} \sum_{l:l_{|l|}=n} F^k_{l}(t+\tau) \right|{\bf Q}(t)\right] \label{eq: drift1}\\
 = &T^2K + \sum_{k,n} Q^k_n(t) \left(T \lambda^k_n + B\right)  \nonumber \\
 	&- \mathbb{E}\left[\left.  \sum_{\tau=0}^{T-1} \sum_{k,l} F^k_{l}(t+\tau)  \left( Q^k_{l_1}(t) - Q^k_{l_{|l|}}(t) \right) \right| {\bf Q}(t)  \right] \label{eq: drift2}
 \end{align}

 The TBP policy minimizes the right hand side of inequality (\ref{eq: drift1}) at every time-slot.  Hence it also minimizes the right hand side of inequality (\ref{eq: drift2}). So we can bound the drift by the rate variables chosen by the stationary policy.

 \begin{align}
 \Delta_T \le&T^2K + \sum_{k,n} Q^k_n(t) \left(T \lambda^k_n + B\right) + \sum_{k,n} Q^k_n   \nonumber\\
 	& \mathbb{E} \left[  \sum_{\tau=0}^{T-1} \sum_{l:l_1=n} F^k_{l}(t+\tau, \pi) \right.\nonumber \\
 	&\left.\left.- \sum_{\tau=0}^{T-1} \sum_{l:l_{|l|}=n} F^k_{l}(t+\tau, \pi) \right|{\bf Q}(t)\right] \\
 	\le& T^2K + \sum_{k,n} Q^k_n(t) \left(T \lambda^k_n + B \right) - \sum_{k,n} Q^k_n (T\lambda_n^k \nonumber \\
 	&+ T\epsilon-B) \label{eq: rand_drift}\\
 \Delta^{TBP}_T \le &T^2K - \sum_{k,n}Q^k_n(t) (T\epsilon-2B)
 \end{align}
 We use Lemma \ref{lem: rand_drift} to obtain (\ref{eq: rand_drift}). The drift is negative when $T>2B/\epsilon$ and the queues are large. From \cite{neely_book} we know that the overlay queues are strongly stable.

\end{document}